\theoremstyle{thmstyleone}%
\newtheorem{theorem}{Theorem}
\theoremstyle{thmstyletwo}%
\newtheorem{remark}{Remark}%
\newtheorem{lemma}{Lemma}
\theoremstyle{thmstylethree}%
\newtheorem{corollary}{Corollary}
\begin{document}

\title[Sharper Bounds on Four Lattice Constants]{Sharper Bounds on Four Lattice Constants}


\author*[1]{\fnm{Jinming} \sur{Wen}}\email{jinming.wen@mail.mcgill.ca}

\author[2]{\fnm{Xiao-Wen} \sur{Chang}}\email{chang@cs.mcgill.ca}

\affil*[1]{\orgdiv{College of Information Science and Technology}, \orgname{Jinan University}, \orgaddress{ \city{Guangzhou}, \postcode{510632},
and \orgname{State Key Laboratory of Cryptology}, \city{Beijing}, \postcode{100878}, \country{China}}}

\affil[2]{\orgdiv{School of Computer Science}, \orgname{McGill University}, \orgaddress{\city{Montreal}, \postcode{H3A 0E9},  \country{Canada}}}

\abstract{The Korkine--Zolotareff (KZ) reduction, and its generalisations, are widely used lattice reduction strategies in communications and cryptography.
The KZ constant and Schnorr's constant were defined by Schnorr in 1987.
The KZ constant can be used to quantify some useful properties of KZ reduced matrices.
Schnorr's constant can be used to characterize the output quality of his block $2k$-reduction and
is used to define his semi block $2k$-reduction, which was also developed  in 1987.
Hermite's constant, which is a fundamental constant lattices, has many applications,
such as bounding the length of the shortest nonzero lattice vector and the orthogonality defect of lattices.
Rankin's constant was introduced by Rankin in 1953 as a generalization of Hermite's constant.
It plays an important role in characterizing the output quality of block-Rankin reduction, proposed by Gama et al. in 2006.
In this paper, we first develop a linear upper bound on Hermite's constant
and then use it to develop an upper bound on the KZ constant.
These upper bounds are sharper than those obtained recently by the  authors,
and the ratio of the new linear upper bound to the nonlinear upper bound,
developed by Blichfeldt in 1929, on Hermite's constant is asymptotically 1.0047.
Furthermore, we develop lower and upper bounds on Schnorr's constant.
The improvement to the lower bound over the sharpest existing one developed by Gama et al. is around 1.7 times asymptotically,
and the improvement to the upper bound over the sharpest existing one which was also developed by Gama et al. is around 4 times asymptotically.
Finally, we develop lower and upper bounds on Rankin's constant.
The improvements of the bounds over the sharpest existing ones,
also developed by Gama et al., are exponential in the parameter defining the constant.}

\keywords{KZ reduction, Hermite's constant, KZ constant, Schnorr's constant, Rankin's constant}

\footnotetext{Part of the work was presented in 2019 IEEE International Symposium on Information Theory, July 7-12, Paris, France.}
\footnotetext{This work was partially supported by
National Natural Science Foundation of China (Nos.11871248),
the Guangdong Province Universities and Colleges Pearl River Scholar
Funded Scheme (2019), Natural Science Foundation of Guangdong Province of China (2021A515010857,2022A1515010029),
Guangdong Major Project of  Basic and Applied Basic Research (2019B030302008),
and the Natural Sciences and Engineering Research
Council of Canada (NSERC) Grant RGPIN-2017-05138.}
\maketitle
\section{Introduction}
\label{s:introduction}
The lattice  generated by a full column rank matrix $\boldsymbol{A}\in \mathbb{R}^{m\times n}$
is defined by
\begin{equation}
\label{e:latticeA}
\mathcal{L}(\boldsymbol{A})=\{\boldsymbol{A}\boldsymbol{x} : \boldsymbol{x} \in \mathbb{Z}^n\}.
\end{equation}
The column vectors of $\boldsymbol{A}$ represent the basis  of $\mathcal{L}(\boldsymbol{A})$.
Here $m$ and $n$ are called the dimension and the rank of the lattice,respectively.

A matrix $\boldsymbol{Z} \in \mathbb{Z}^{n\times n}$ satisfying $\lvert\det(\boldsymbol{Z})\rvert=1$ is said to be unimodular.
For any unimodular $\boldsymbol{Z} \in \mathbb{Z}^{n\times n}$, $\mathcal{L}(\boldsymbol{A}\boldsymbol{Z})$ is the same lattice as $\mathcal{L}(\boldsymbol{A})$.
Lattice reduction is the process of finding a unimodular $\boldsymbol{Z}$ such that the column vectors of
$\boldsymbol{A}\boldsymbol{Z}$ are short and nearly orthogonal.
There are several types of lattice reduction strategies.
The Lenstra--Lenstra--Lov\'asz (LLL) reduction, the Korkine--Zolotareff (KZ) reduction and its generalisations are  popular,
and they have crucial applications in many domains including communications \cite{AgrEVZ02} and cryptography \cite{MicR08}.

The closest vector problem (CVP)
\begin{equation*}
\min_{\boldsymbol{x}\in\mathbb{Z}^n}\|\boldsymbol{y}-\boldsymbol{A}\boldsymbol{x}\|_2
\end{equation*}
arises in many applications.
To save the computational cost for solving the CVP,
LLL reduction is often used to preprocess the matrix $\boldsymbol{A}$.
In some communications applications, a number of  CVP instances with the same matrix $\boldsymbol{A}$ but different $\boldsymbol{y}$ need to be solved.
In this situation, instead of LLL reduction, KZ reduction
can be applied to preprocess $\boldsymbol{A}$.
Although it is more time consuming to perform KZ reduction than LLL reduction,
the KZ reduced matrix has better properties than the one obtained by LLL reduction,
and hence the total computational time required to solve the CVP instances may be less.
The KZ reduction also has applications in successive integer-forcing linear receiver design \cite{OrdEN13}
and integer-forcing linear receiver design \cite{SakHV13}.

It is useful to quantify the performance of KZ reduction in terms of how it shortens the lengths of basis vectors and reduces the orthogonality defect of lattice bases.
The KZ constant, defined by Schnorr \cite{Sch87}, is a measure of the quality of KZ reduced matrices.
It can be used to bound the lengths of the column vectors of KZ reduced matrices from above \cite{LagLS90,WenC18}.
Furthermore, the KZ constant has applications in bounding
the decoding radius and the proximity factors of KZ-aided successive interference cancellation
(SIC) decoders from below \cite{WenC18, Lin11, LuzSL13}.
Although the KZ constant is an important quantity, there is no formula for it.
Fortunately, it has several upper bounds which are functions of the rank $n$ of $\mathcal{L}(\boldsymbol{A})$ \cite{Sch87, HanS08, WenC18}.
Due to the importance of the  KZ constant,
an improvement to its sharpest existing upper bound presented in \cite{WenC18} is an important result in the theory of lattices.

Schnorr introduced  block and semi block $2k$-reductions \cite{Sch87}, where $k$ is an integer parameter.
Schnorr's  semi block $2k$-reduction is a relaxed form of his block $2k$-reduction,
and is a generalization of and more powerful than LLL reduction.
Furthermore, it can also be performed in polynomial time for reducing an integer matrix with suitable $k$.

A constant $\beta_k$, called  Schnorr's constant \cite{GamHKN06},
was defined in \cite{Sch87} to globally measure the drop of the diagonal entries of the R-factor of QR factorization of
block $2k$-reduced matrices.
Furthermore, it is used to define semi block $2k$-reduction.
Although $\beta_k$ is important, there is no formula for it.
Fortunately, Schnorr upper bounded it in \cite[Theorem 2.7]{Sch87}, and the upper bound was improved in \cite[Theorem 2]{GamHKN06}.
Since $\beta_k$ is used to characterize the output quality of (semi) block $2k$-reduction,
and a better upper bound on it can better quantify the quality of (semi) block $2k$-reduction,
we aim to further improve the existing sharpest upper bound in \cite[Theorem 2]{GamHKN06}.

On the other hand, Ajtai showed that there is an $\varepsilon>0$ such that $\beta_k\geq k^{\varepsilon}$
without giving an explicit value of $\varepsilon$ \cite{Ajt08}.
Gama et al. used a completely new method and showed that $\beta_k\geq k/12$ \cite{GamHKN06}.
Since the lower bound on $\beta_k$ shows the limitations of (semi) block $2k$-reduction,
and a better lower bound on it can better quantify the limitations of (semi) block $2k$-reduction,
we will improve the lower bound to around $k/7$ with a simpler method than that used in \cite{GamHKN06}.

Hermite's constant can be used to quantify the length of the shortest nonzero vectors of lattices.
It also has applications in bounding the KZ constant and Schnorr's constant $\beta_k$ from above \cite{Sch87}.
Furthermore, it can be used to derive lower bounds on
the decoding radius of the LLL-aided SIC decoders \cite{WenC18, LuzSL13},
and upper bounds on the orthogonality defect of KZ reduced matrices \cite{LagLS90,WenC18,LyuL17}.
Although Hermite's constant is important, its exact values are known for dimension $1\leq n\leq 8$ and $n=24$ only.
Thus, an upper bound for arbitrary integer $n$ is needed.
In the above applications, Hermite's constant's linear upper bounds play crucial roles.
Hence, in addition to the nonlinear upper bound in \cite{Bli29}, several linear upper bounds on Hermite's constant
have been proposed \cite{LagLS90, NguV10, Neu17}.
Since Hermite's constant has numerous applications,
we will improve the sharpest available linear upper bound provided in \cite{WenC18}.

A mathematical constant $\gamma_{n,\ell}$, where $\ell$ and $n$ are integers and satisfy $1\leq \ell\leq n$,
was introduced by Rankin in 1953 \cite{Ran53} as a generalization of Hermite's constant.
This constant was called Rankin's constant \cite{GamHKN06}.
A reduction procedure called block-Rankin reduction was introduced in \cite{GamHKN06}.
As mentioned in  \cite{GamHKN06}, this reduction may find a shorter lattice vector than semi block $2k$-reduction.
Similar to Schnorr's constant for (semi) block $2k$-reduction,
Rankin's constant is essential to quantify the quality of block-Rankin reduction.
Furthermore, it has been shown in \cite[Theorem 1]{GamHKN06} that $\gamma_{2k,k}^{2/k}\leq \beta_k$.
Therefore, Rankin's and Schnorr's constants have a close relationship and a lower bound on the former
can be used to lower bound the latter.
However, as far as we know, few of Rankin's constants are known.
More exactly, $\gamma_{1,1}=1, \gamma_{4,2}=3/2$ \cite{GamHKN06} and
$\gamma_{6,2}=3^{2/3},\gamma_{8,2}=3,\gamma_{8,3}=\gamma_{8,4}=4$ \cite{SawWO10}.
Therefore, it is essential to bound it, and
Gama et al. \cite{GamHKN06} developed an upper and a lower bound on $\gamma_{2k,k}$ for any integer $k$.
In this paper, we will improve these bounds.

In summary, this paper will improve upper bounds on the KZ and Hermite's constants,
and upper and lower bounds on both Schnorr's constant and Rankin's constant.
More exactly, the main contributions of this paper are summarized as follows,
where the first two contributions were published in the conference paper \cite{WenCW19}.
\begin{itemize}
\item
We improve the existing sharpest linear upper bound on Hermite's constant from $\frac{n}{8}+\frac{6}{5}$ \cite[Theorem 1]{WenC18}
to $\frac{n}{8.5}+2$ (see Theorem \ref{t:gamman})
(the new upper bound is sharper than the former only when $n\geq109$).
The ratio of the new linear upper bound to the nonlinear upper bound developed by Blichfeldt \cite{Bli29} in 1929
is less than 1.0226 for $n\geq 109$ and asymptotically 1.0047 (see Fig.\ \ref{f:HC}).

\item
We improve the existing sharpest upper bound on the KZ constant, that was developed in \cite[Theorem 2]{WenC18}.
Details are given in Theorem \ref{t:KZconstantUB} and Fig.\ \ref{f:KZ}.

\item
We improve the existing sharpest lower bound on Schnorr's constant from $\frac{k}{12}$ \cite{GamHKN06}
to $(\frac{4}{\pi^2\sqrt{k}})^{2/k}\frac{2k}{\pi e^{3/2}}$ (which is around $\frac{k}{7}$ for large $k$)
(see Corollary  \ref{c:SCld}).
We also improve the existing sharpest upper bound on Schnorr's constant
developed in \cite[Theorem 2]{GamHKN06}.
The new one is around 4 times smaller
for $k\geq 40$; see Theorem \ref{t:SC} and Fig.\ \ref{f:SC}  for more details.

\item
We improve the existing sharpest lower and upper bounds on Rankin's constant
 developed in \cite[Theorems 1-3]{GamHKN06}.
The improvements to the lower and upper bounds are around $1.7^{k/2}$ and $2^k$ times;
 see Theorem \ref{t:RC} and Fig.\ \ref{f:RC} for more details.

\end{itemize}

The remainder of the paper is organized as follows.
We develop a new  linear upper bound on Hermite's constant and a new upper bound on the KZ constant
in Sections \ref{s:HC} and \ref{s:KZ}, respectively.
The lower and upper bounds on Schnorr's constant will be improved in \ref{s:sc},
and the lower and upper bounds on Rankin's constant will be improved in \ref{s:rc}.
Finally, we summarize this paper and propose some future works  in Section \ref{s:sum}.

{\it Notation.}
Let $\mathbb{R}^{m\times n}$ and $\mathbb{Z}^{m\times n}$ be the spaces of the $m\times n$ real matrices and integer matrices, respectively.
Boldface lowercase letters denote column vectors and boldface uppercase letters denote matrices.
For a matrix $\boldsymbol{A}$, we use $a_{ij}$ to denote its $(i,j)$ entry
and use $\boldsymbol{A}_{i:j,k:\ell}$ to denote the submatrix containing elements with row indices from $i$ to $j$ and column indices from $k$ to $\ell$.
We use $\Gamma(n)$ and  $\zeta(n)$ to denote the Gamma function and  the Riemann zeta function,
respectively.

\section{A sharper linear bound on Hermite's constant}
\label{s:HC}

This section develops a new linear upper bound, on Hermite's constant,
which is sharper than that presented in \cite[Theorem 1]{WenC18} when $n\geq 109$.

We first introduce the definition of Hermite's constant.
Denote the set of $m\times n$ real matrices with full column rank by $\mathbb{R}_n^{m\times n}$,
then Hermite's constant $\gamma_n$ is defined as
\[
\gamma_n=\sup_{\boldsymbol{A}\in\mathbb{R}_n^{m\times n}}\frac{(\lambda (\boldsymbol{A}))^2}{(\det(\boldsymbol{A}^\top\boldsymbol{A}))^{1/n}},
\]
where $\lambda(\boldsymbol{A})$ represents the length of a shortest nonzero vector of ${\cal L}(\boldsymbol{A})$, i.e.,
$$
\lambda(\boldsymbol{A})=\min_{\boldsymbol{x}\in \mathbb{Z}^{n}\backslash \{\boldsymbol{0}\}} \|\boldsymbol{A}\boldsymbol{x}\|_2.
$$

Although Hermite's constant is a vital constant of lattices, the values of $\gamma_n$
are known only for $n=1,\ldots,8$ \cite{Mar13} and $n=24$ \cite{CohA04} (see also \cite[Table 1]{WenC18}),
and they are summarized in Table \ref{tab:HC}.

\begin{table}[h!]
\caption{Exact values of $\gamma_n$ for $n=1,\ldots,8$ and $n=24$}
\begin{center}
\begin{tabular}{llllllllll}
\toprule
$n$& $1$ & $2$ & $3$& $4$& $5$ & $6$& $7$& $8$& $24$\\
\midrule
$\gamma_n$ & 1 & $\frac{2}{\sqrt{3}} $& $2^{1/3}$ & $\sqrt{2} $& $8^{1/5} $& $(\frac{64}{3})^{1/6}$
&$64^{1/7}$&2 &4\\
\bottomrule
\end{tabular}
\end{center}
\label{tab:HC}
\end{table}

Fortunately, there are some upper bounds on $\gamma_n$ for any $n$ in the literature and the sharpest one is
\begin{equation}
\label{e:blichfeldt}
\gamma_n \leq \frac{2}{\pi} (\Gamma(2+n/2))^{2/n},
\end{equation}
which was given by Blichfeldt \cite{Bli29}.

As explained in Section \ref{s:introduction}, linear upper bounds on $\gamma_n$ are very useful.
There are several linear upper bounds: $\gamma_n \leq \frac{2}{3}n$ (for $n\geq 2$) \cite{LagLS90};
$\gamma_n \leq 1 + \frac{n}{4}$ (for $n\geq 1$) \cite[p.35]{NguV10},
$\gamma_n \leq \frac{n+6}{7} $ (for $n\geq 2$) \cite{Neu17},
and
\begin{equation}
\label{e:gammanWC}
\gamma_n < \frac{n}{8}+\frac{6}{5}, \,\; n\geq 1,
\end{equation}
which was the most recent one given in \cite[Theorem 1]{WenC18}.

The following theorem gives a new linear upper bound on $\gamma_n$,
which is sharper than \eqref{e:gammanWC} when $n\geq109$.

\begin{theorem}
\label{t:gamman}
For $n\geq 1$,
\begin{equation}
\label{e:gamman}
\gamma_n < \frac{n}{8.5}+2.
\end{equation}
\end{theorem}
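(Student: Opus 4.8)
The plan is to derive \eqref{e:gamman} from Blichfeldt's nonlinear bound \eqref{e:blichfeldt}, so the entire problem reduces to proving
$$\frac{2}{\pi}\bigl(\Gamma(2+n/2)\bigr)^{2/n} < \frac{n}{8.5}+2.$$
First I would dispose of the small ranks cheaply. For $1\le n\le 108$ the earlier bound \eqref{e:gammanWC} already gives $\gamma_n < \tfrac{n}{8}+\tfrac{6}{5}$, and since $\tfrac{n}{8}+\tfrac{6}{5}\le \tfrac{n}{8.5}+2$ holds exactly when $n\le 108.8$, the desired inequality \eqref{e:gamman} follows for these $n$ without touching the Gamma function at all. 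This confines the analytic work to $n\ge 109$, the regime in which Stirling-type estimates are tight enough to be safe.

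For $n\ge 109$ I would take logarithms and bound the right-hand factor of \eqref{e:blichfeldt} from above using an explicit Stirling inequality such as $\Gamma(x)\le \sqrt{2\pi}\,x^{x-1/2}e^{-x}e^{1/(12x)}$ with $x=2+n/2$. This turns the target into showing $\varphi(n)>0$, where $\varphi(n)=\ln\!\bigl(\tfrac{\pi}{2}(\tfrac{n}{8.5}+2)\bigr)-\tfrac{2}{n}\ln\Gamma(2+n/2)$, and the Stirling bound supplies an explicit elementary lower bound for $\varphi$ whose positivity I would establish directly. The mechanism that makes this possible is that Blichfeldt's bound grows like $\tfrac{n}{\pi e}$ with $\pi e\approx 8.5397$, so its slope $1/(\pi e)$ is strictly below the slope $1/8.5$ of the linear bound; the surplus $\bigl(\tfrac{1}{8.5}-\tfrac{1}{\pi e}\bigr)n$ plus the additive constant must absorb the $O(\ln n)$ gap between Blichfeldt's bound and its leading term. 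Consistently, $\varphi(n)$ tends to $\ln\tfrac{2\pi}{17}+1\approx 0.0047$, which exponentiates to the asymptotic ratio $e\,\tfrac{2\pi}{17}=\tfrac{\pi e}{8.5}\approx 1.0047$ quoted in the abstract.

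The hard part is that this surplus is razor-thin: the slope difference $\tfrac{1}{8.5}-\tfrac{1}{\pi e}$ is only about $5.5\times10^{-4}$, and a careful computation shows the gap between the two bounds shrinks to roughly $0.06$ in absolute terms (so $\varphi$ dips to order $10^{-3}$) near $n\approx 640$ before widening back toward its limit. A crude Stirling estimate cannot close so small a margin, so I would retain the full $e^{1/(12x)}$ correction, which is numerically negligible once $x\gtrsim 56$ and hence costs essentially nothing, and then argue structurally: show that the elementary lower bound for $\varphi$ is eventually increasing with at most one interior minimizer, locate that minimizer, bound $\varphi$ there from below by an explicit positive quantity, and finish with a direct check on the short initial segment of integers just above $n=109$ where the estimate is least comfortable. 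Proving positivity uniformly across this delicate intermediate regime is where essentially all the difficulty resides.
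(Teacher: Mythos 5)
Your strategy is viable and genuinely different from the paper's, but as written it is a plan rather than a proof: the decisive positivity verification for $n\ge 109$ is described, not executed. The two routes differ as follows. You split the range, disposing of $1\le n\le 108$ by citing the earlier bound \eqref{e:gammanWC} --- this is legitimate (it is a prior result of \cite{WenC18}, not derived from the present theorem, so there is no circularity) and your arithmetic $\frac{n}{8}+\frac{6}{5}\le \frac{n}{8.5}+2 \Leftrightarrow n\le 108.8$ is correct; for $n\ge 109$ you would bound $\Gamma(2+n/2)$ above by Stirling with the $e^{1/(12x)}$ correction and show $\varphi(n)>0$. The paper instead handles all $n\ge 1$ uniformly through Blichfeldt's bound \eqref{e:blichfeldt}: writing $t=n/2$, it checks $\phi(t)=\left[\frac{\pi}{8.5}(t+8.5)\right]^{t}/\Gamma(2+t)>1$ by direct calculation for the $620$ half-integers $t=0.5,1,\ldots,310$ (i.e.\ all $n\le 620$), and then proves $\ln\phi$ is increasing for $t\ge 310$ via the digamma bound \eqref{e:digammaud}, which gives $\bar{\phi}'(t)\ge \rho(t)$ with $\rho$ increasing for $t>31$ and $\rho(310)>0.0000796$. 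Your margin analysis matches exactly what makes the paper's proof delicate: the limiting value $\ln\frac{2\pi}{17}+1\approx 0.0047$, the dip of the relative gap to order $10^{-3}$ at $n$ in the several hundreds, and the consequent need (in the paper's version) to push the finite check out to $t=310$ before any monotonicity bound takes over. What your route buys: no long brute-force verification, since the citation of \eqref{e:gammanWC} replaces it, and the Stirling correction is cheap --- keeping $e^{1/(12x)}$ costs only $O(10^{-5})$ in $\varphi$, comfortably below the $10^{-3}$ dip, so the plan is numerically feasible. What the paper's route buys: the digamma inequality eliminates all Stirling error bookkeeping and reduces everything beyond the finite check to a clean monotonicity statement.

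The gap is that your final step --- ``show the elementary lower bound for $\varphi$ is eventually increasing with at most one interior minimizer, locate that minimizer, bound $\varphi$ there from below by an explicit positive quantity'' --- is precisely where all the difficulty resides (as you yourself note), and it is asserted rather than proved. Closing it requires the analogue of the paper's two concrete computations: differentiate your Stirling-based lower bound and show its derivative has a single sign change on $[109,\infty)$ (the counterpart of the paper's verification that $\rho'(t)\ge 0$ for $t>31$), then evaluate the bound at or near the minimizer and exhibit an explicit positive value (the counterpart of $\rho(310)>0.0000796$). Because the quantity being shown positive is of size $10^{-3}$ while your controlled errors are of size $10^{-5}$, this will succeed, but it is real work of roughly the same length as the paper's proof; until it is carried out, the theorem is established only for $n\le 108$.
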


\begin{proof}
By \eqref{e:blichfeldt}, to show \eqref{e:gamman}, it suffices to show
\[
\left(\Gamma\left(2+\frac{n}{2}\right)\right)^{2/n} <  \frac{\pi(n+17)}{17},
\]
which is equivalent to
\begin{equation}
\label{e:lbd}
\Gamma\left(2+\frac{n}{2}\right) < \left(\frac{\pi(n+17)}{17}\right)^{n/2}.
\end{equation}
Then, to show \eqref{e:gamman}, it is equivalent to show that
\[
\phi(t):=\frac{\left[\frac{\pi}{8.5}(t+8.5)\right]^{t}}{\Gamma\left(2+t\right)} > 1
\]
for $t= 0.5, 1, 1.5, 2, 2.5, \ldots$.

By some direct calculations, one can show that
\[
\phi(t)>1, \mbox{ for }t= 0.5, 1, 1.5, 2, 2.5, \ldots, 310.
\]
Thus, to show \eqref{e:gamman}, we only need to show that $\bar{\phi}(t):=\ln(\phi(t))$
is monotonically  increasing when $t\geq 310$.

By some direct calculations, we have
\begin{align*}
\bar{\phi}'(t)= \ln \left[\frac{\pi}{8.5}(t+8.5)\right] +\frac{t}{t+8.5}-\psi(t+2),
\end{align*}
where $\psi(t+2)$ is the digamma function, i.e., $\psi(t+2)=\Gamma'(t+2)/\Gamma(t+2)$.
Then, to show \eqref{e:gamman}, we only need to show that $\bar{\phi}'(t)\geq 0$ when $t\geq 310$.
To this end, we use the following inequality from \cite[eq. (1.7)]{Bat08}:
\begin{equation}
\label{e:digammaud}
\psi(t+2)\leq \ln (t+e^{1-\gamma}), \mbox{ for } t\geq 0,
\end{equation}
where
\[
\gamma=\lim_{n\rightarrow \infty} (-\ln n + \sum_{k=1}^n 1/k)
\]
is referred to as Euler--Mascheroni constant.
Then, from the expression of $\bar{\phi}'(t)$ given before, we have
 $$
 \bar{\phi}'(t)\geq \rho(t),
$$
 where
\begin{align*}
\rho(t):&=\ln \left[\frac{\pi(t+8.5)}{8.5}\right] +\frac{t}{t+8.5}-\ln (t+e^{1-\gamma})\\
&=\ln (t+8.5)-\frac{8.5}{t+8.5}-\ln (t+e^{1-\gamma}) +\ln\frac{\pi e}{8.5}.
\end{align*}
Since
\begin{align*}
\rho'(t) & = \frac{1}{t+8.5} +\frac{8.5}{(t+8.5)^2} - \frac{1}{t+e^{1-\gamma}} \\
&=  \frac{(t+8.5)(t+e^{1-\gamma})+8.5(t+e^{1-\gamma})}{(t+8.5)^2(t+e^{1-\gamma})} \\
&\quad-  \frac{(t+8.5)^2}{(t+8.5)^2(t+e^{1-\gamma})} \\
&=\frac{e^{1-\gamma}t-(72.25-17e^{1-\gamma})}{(t+8.5)^2(t+e^{1-\gamma})},
\end{align*}
and $\gamma<0.58$ \cite{Leh75}, $\rho'(t)\geq0$ when $t>31$ as
\begin{align*}
&e^{1-\gamma}t-(72.25-17e^{1-\gamma})\\
>&31\times e^{1-\gamma}-(72.25-17e^{1-\gamma})>0.
\end{align*}
Thus, for $t\geq 310$, we have
\[
\bar{\phi}'(t)\geq \rho(t)\geq \rho(310) > 0.0000796>0,
\]
where the third inequality follows form the fact that $\gamma>0.57$ \cite{Leh75}.
\end{proof}

By some simple calculations, one can easily see that the upper bound \eqref{e:gamman} is sharper than
the upper bound \eqref{e:gammanWC} when $n\geq 109$.
When $n\leq 108$,  \eqref{e:gammanWC} is sharper than \eqref{e:gamman}, but their difference is small.
By Stirling's approximation for Gamma function, the right-hand side of \eqref{e:blichfeldt}
is  asymptotically $\frac{n}{\pi e}\approx\frac{n}{8.54} $. Thus, the linear bound given by \eqref{e:gamman}
is very close to the nonlinear upper bound given by \eqref{e:blichfeldt}.
To clearly show the improvement of \eqref{e:gamman} over \eqref{e:gammanWC} and how close \eqref{e:gamman} is to \eqref{e:blichfeldt},
in Fig.\ \ref{f:HC} we plot the ratios of the two upper bounds in \eqref{e:gamman} and  \eqref{e:gammanWC} to Blichfeldt's upper bound in \eqref{e:blichfeldt}, respectively for different rank $n$.

\begin{figure}[!htbp]
\centering
\includegraphics[width=3.4 in]{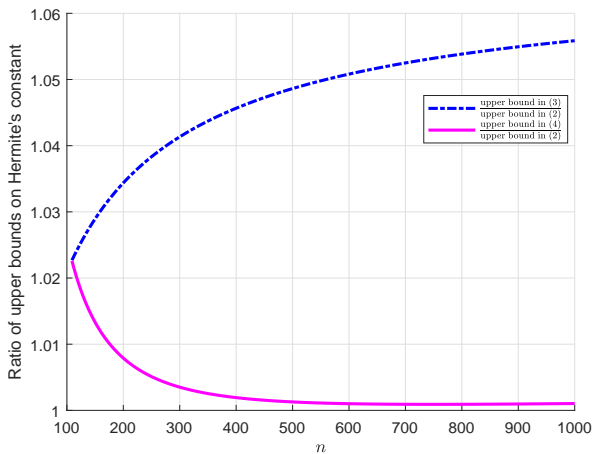}
\caption{
Ratios of two upper bounds to
Blichfeldt's upper bound on Hermite's constant
versus $n$ for $n=109:1:1000$}
\label{f:HC}
\end{figure}

From Fig.\ \ref{f:HC}, one can see that \eqref{e:gamman} is sharper than \eqref{e:gammanWC} for $n\geq 109$,
and the upper bound given by \eqref{e:gamman} is very close to the nonlinear upper bound given by \eqref{e:blichfeldt}.
More exactly, by some calculations, we can show that the ratio of the right-hand side of \eqref{e:gamman}
to the right-hand side of \eqref{e:blichfeldt} is less than 1.0226 for $n\geq 109$ and is asymptotically 1.0047.

In the following, we give some remarks.
\begin{remark}
\label{r:prof}
The main idea of the proof of  \eqref{e:gamman} is different from that of \eqref{e:gammanWC}  \cite{WenC18}.
More exactly, to show \eqref{e:gammanWC}, \cite{WenC18} proved (cf.\ \eqref{e:lbd})
\begin{equation} \label{e:gubdWC}
\Gamma\left(2+\frac{n}{2}\right) < \left(\frac{\pi(n+9.6)}{16}\right)^{n/2}.
\end{equation}
To prove \eqref{e:gubdWC},  \cite{WenC18} first gave an upper bound on $\Gamma\left(2+\frac{n}{2}\right)$
and then showed the right-hand side of
\eqref{e:gubdWC} is larger than this upper bound,
while  the proof of  \eqref{e:lbd} here shows $\ln(\phi(t))$ is a monotonically  increasing function
by using an upper bound on the digamma function (see \eqref{e:digammaud}).
\end{remark}

\begin{remark}
\label{r:LLLdr}
The improved linear upper bound \eqref{e:gamman} on $\gamma_n$ can be used to improve
the lower bound on the decoding radius of the LLL-aided  SIC decoder
that was given in \cite{WenC18} and is an improvement to the one given  in  \cite[Lemma 1]{LuzSL13}.
Since the derivation of the new  lower bound on the decoding radius is straightforward
by following the proof of \cite[Lemma 1]{LuzSL13} and \eqref{e:gamman}, we do not provide the details.
\end{remark}

\begin{remark}
\label{r:LLLOD}
The improved linear upper bound \eqref{e:gamman} on $\gamma_n$ can be used to improve
the upper bound on the orthogonality defect of KZ reduced matrices
that was presented in \cite[Theorem 4]{WenC18}.
Note that the orthogonality defect of a matrix is a good measure of the orthogonality
of the matrix and hence it is often used in characterizing the quality of
an LLL or KZ reduced matrix.
\end{remark}

\section{A sharper bound on the KZ constant}
\label{s:KZ}

In this section, we develop an upper bound, on the KZ constant,
which is sharper than that given by \cite[Theorem 2]{WenC18}.

We first briefly introduce the definition of KZ reduction.
Suppose that $\boldsymbol{A}$ in \eqref{e:latticeA} has the following QR factorization (see, e.g., \cite[Chap.\ 5]{GolV13}):
\begin{equation}
\label{e:QR}
\boldsymbol{A}= \boldsymbol{Q} \boldsymbol{R},
\end{equation}
where $\boldsymbol{Q}\in \mathbb{R}^{m\times n}$ has orthonormal columns (i.e., $\boldsymbol{Q}^T\boldsymbol{Q}=\boldsymbol{I}_n$)
and $\boldsymbol{R}\in \mathbb{R}^{n\times n}$ is a nonsingular upper triangular matrix,
and they are respectively referred to as $\boldsymbol{A}$'s Q-factor and R-factor.
If $\boldsymbol{R}$ in \eqref{e:QR} satisfies:
\begin{align}
\label{e:criteria1}
&{\lvert}r_{ij}{\rvert}\leq\frac{1}{2}{\lvert}r_{ii}{\rvert}, \ \ 1\leq i \leq j-1 \leq n-1,  \\
&{\lvert}r_{ii}{\rvert} =\min_{\boldsymbol{x}\,\in\,\mathbb{Z}^{n-i+1}\backslash \{\boldsymbol{0}\}}\|\boldsymbol{R}_{i:n,i:n}\boldsymbol{x}\|_2, \ \
1\leq i \leq n,
\label{e:criteria2}
\end{align}
then $\boldsymbol{A}$ and $\boldsymbol{R}$ are said to be  KZ reduced.
Given $\boldsymbol{A}\in \mathbb{R}_n^{m\times n}$, which is the set of all $m\times n$ full column rank real matrices, KZ reduction is the process of finding a unimodular matrix
$\boldsymbol{Z}\in \mathbb{Z}^{n\times n}$ such that $\boldsymbol{A}\boldsymbol{Z}$ is  KZ reduced.

Let $\mathcal{S}_{KZ}$ denote the set of all $m\times n$ KZ reduced matrices with full column rank,
then the KZ constant  is  defined as \cite{Sch87}
\begin{equation}
\label{e:KZconstant}
\alpha_n=\sup_{\boldsymbol{A}\in \mathcal{S}_{KZ}} \frac{(\lambda (\boldsymbol{A}))^2}{r_{nn}^2},
\end{equation}
where $\lambda (\boldsymbol{A})$ denotes the length of a shortest nonzero vector of $\mathcal{L}(\boldsymbol{A})$,
and $r_{nn}$ is the last diagonal entry of the R-factor $\boldsymbol{R}$ of $\boldsymbol{A}$ (see \eqref{e:QR}).

As explained in Section \ref{s:introduction}, the KZ constant is an important quantity  for characterizing
some properties of KZ reduced matrices.  However, its exact value is unknown.
Hence, it is useful to find a good upper bound on it.
Schnorr in \cite[Corollary 2.5]{Sch87} proved that
\[
\alpha_n\leq n^{1+\ln n}, \ \ \mbox{for } n\geq 1.
\]
Hanrot and Stehl{\'e} in \cite[Theorem 4]{HanS08} showed that
\[
\alpha_n\leq n\prod_{k=2}^nk^{1/(k-1)} \leq n^{\frac{\ln n}{2}+\mathcal{O}(1)}, \ \ \mbox{for } n\geq 2.
\]
Based on the exact value  of $\gamma_n$  for $1\leq n\leq 8$ and the upper bound
on $\gamma_n$  in \eqref{e:gammanWC} for $n\geq 9$, Wen and Chang in \cite[Theorem 2]{WenC18}
 showed that
\begin{equation}
\label{e:KZconstantUB1}
\alpha_n\leq 7 \left( \frac{1}{8}n +\frac{6}{5}\right)  \left(\frac{n-1}{8}\right)^{\frac{1}{2}\ln((n-1)/8)}, \ \ \mbox{for } n\geq 9.
\end{equation}

In the following theorem we provide a new upper bound on $\alpha_n$ for $n\geq109$,
which is sharper than that in \eqref{e:KZconstantUB1} for $n\geq 111$.
The new bound on $\alpha_n$ is based on the new upper bound on
Hermite's constant $\gamma_n$ in \eqref{e:gamman},
which is sharper than that in \eqref{e:gammanWC}  for $n\geq 109$.

\begin{theorem}
\label{t:KZconstantUB}
The KZ constant $\alpha_n$ satisfies
\begin{equation}
\label{e:KZconstantUB2}
\alpha_n\leq  8.1 \left( \frac{n}{8.5} +2\right)  \left(\frac{2n-1}{17}\right)^{\frac{1}{2}\ln((2n-1)/17)}, \ \ \mbox{for } n\geq 109.
\end{equation}
\end{theorem}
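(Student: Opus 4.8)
The plan is to follow the same route that produced \eqref{e:KZconstantUB1} in \cite{WenC18}, but to feed in the sharper Hermite estimate \eqref{e:gamman} in place of the older \eqref{e:gammanWC}. The conceptual starting point is the chain of Hermite-type inequalities forced by the KZ reduction conditions. Writing $d_i=r_{ii}^2$ and applying \eqref{e:criteria2} together with Hermite's inequality to each trailing block $\boldsymbol{R}_{i:n,i:n}$, which has rank $n-i+1$, gives
\[
d_i \le \gamma_{n-i+1}\left(\prod_{j=i}^{n} d_j\right)^{1/(n-i+1)}, \qquad 1\le i\le n .
\]
Because $d_1=r_{11}^2=(\lambda(\boldsymbol{A}))^2$ by \eqref{e:criteria2} with $i=1$, the quantity $\alpha_n=\sup\, d_1/d_n$ is governed by solving the induced recursion on the partial products $\prod_{j=i}^n d_j$. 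The first step I would carry out is to unwind that recursion into the closed-form product bound
\[
\alpha_n \le \gamma_n \prod_{k=2}^{n}\gamma_k^{1/(k-1)} ,
\]
which is exactly the inequality underlying \cite[Theorem 2]{WenC18} (and, with the crude estimate $\gamma_k\le k$, it recovers the Hanrot--Stehl\'e bound of \cite{HanS08}). Everything after this is estimation of the right-hand side.

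Second, I would treat the two factors separately. The leading factor $\gamma_n$ is bounded directly by \eqref{e:gamman}, yielding the $\frac{n}{8.5}+2$ term. For the product I would work with $\prod_{k=2}^{n}\gamma_k^{1/(k-1)}=\exp\!\big(\sum_{k=2}^{n}\frac{\ln\gamma_k}{k-1}\big)$ and split the sum at $k=8$: the low-order terms $k=2,\dots,8$ are evaluated from the exact values in Table \ref{tab:HC} and contribute only a constant, while for the tail $k=9,\dots,n$ I substitute \eqref{e:gamman}, i.e.\ $\ln\gamma_k<\ln\!\big(\frac{k}{8.5}+2\big)$. Using $\frac{k}{8.5}+2=\frac{2k-1}{17}\cdot\frac{2k+34}{2k-1}$, the tail splits into a principal piece $\sum_{k=9}^n\frac{1}{k-1}\ln\frac{2k-1}{17}$ and a correction $\sum_{k\ge 9}\frac{1}{k-1}\ln\frac{2k+34}{2k-1}$; the correction is $O(1/k^2)$ termwise, hence converges to a constant that is harmless. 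The principal piece is bounded by comparison with $\int\frac{\ln((2x-1)/17)}{x-1}\,dx$: since the integrand is decreasing for $k\ge 9$, the sum is at most the corresponding integral, which evaluates to $\tfrac12\big(\ln\frac{2n-1}{17}\big)^2$ up to its lower-endpoint value (note $\frac{2k-1}{17}=1$ at $k=9$, so the natural lower limit is clean). Exponentiating produces precisely the factor $\big(\frac{2n-1}{17}\big)^{\frac12\ln((2n-1)/17)}$, and all constant contributions — the $k\le 8$ product, the convergent correction series, and the integral's endpoint term — are collected into the single prefactor $8.1$.

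The main obstacle will be the bookkeeping of that constant. Since \eqref{e:gamman} is the sharpest Hermite bound only for $k\ge 109$ and is slightly loose for $9\le k\le 108$, using it across the whole tail inflates the accumulated product relative to the $7$ in \eqref{e:KZconstantUB1}; the payoff is that this is overcompensated, for large $n$, by the smaller effective base $\frac{2n-1}{17}\approx\frac{n}{8.5}$ replacing $\frac{n-1}{8}$ in the $\ln$-exponentiated factor. Hence the delicate part is to verify that the clean constant $8.1$ genuinely dominates the combined small-$k$ product, the correction series, and the integral slack \emph{uniformly} for $n\ge 109$, and to confirm the integral comparison is valid from $k=9$ onward (monotonicity of $\frac{\ln(k/8.5+2)}{k-1}$, which I would establish en route). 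I expect to close this with a short direct numerical check at the boundary $n=109$ together with the monotonicity just noted, rather than with any further structural idea.
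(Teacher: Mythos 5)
Your skeleton is the paper's: the product inequality $\alpha_n\le\gamma_n\prod_{k=2}^n\gamma_k^{1/(k-1)}$, exact Hermite values for $k\le 8$, and a sum-to-integral comparison whose principal part exponentiates to $\bigl(\tfrac{2n-1}{17}\bigr)^{\frac12\ln((2n-1)/17)}$. However, there is a genuine quantitative gap: the plan cannot reach the constant $8.1$, and the failure is exactly in the step you declare ``harmless.'' You substitute \eqref{e:gamman} for the \emph{entire} tail $k\ge 9$. Your correction series $\sum_{k\ge 9}\frac{1}{k-1}\ln\frac{2k+34}{2k-1}$ does converge, but its value is about $1.6$ (it is essentially a dilogarithm, $\approx-\mathrm{Li}_2(-2.2)$), i.e.\ it costs a multiplicative factor $e^{1.6}\approx 5$; combined with the small-$k$ product $2^{827/420}3^{-8/15}\approx 2.18$ this alone forces a prefactor near $11$, and a full accounting gives $\approx 11.5$ asymptotically. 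Equivalently: on $9\le k\le 108$ the new bound \eqref{e:gamman} is weaker than \eqref{e:gammanWC}, and the accumulated discrepancy $\sum_{k=9}^{108}\frac{1}{k-1}\ln\frac{k/8.5+2}{k/8+6/5}\approx 0.34$ inflates the constant by a factor $\approx 1.4$ over the paper's $8.09$. Your proposed rescue --- a numerical check at $n=109$ --- cannot work: already at $n=109$ the best constant your route yields is about $9.7$ (the required exponent budget $\ln(8.1/2.18)+\frac12\ln^2\frac{217}{17}\approx 4.56$ is exceeded, since your sum is $\approx 4.73$), and the deficit only grows with $n$. The paper hits $8.1$ precisely by keeping the sharper bound \eqref{e:gammanWC} on $9\le k\le 108$ (the factor $79.06$ in its proof) and invoking \eqref{e:gamman} only for $k\ge 109$: the integral then starts at $107.5$, where the analogue of your correction term is tiny ($\approx 0.16$) and the lower-endpoint term contributes the crucial discount $\bigl(\tfrac{8.5}{107.5}\bigr)^{\frac12\ln(107.5/8.5)}\approx 0.04$.

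A secondary flaw: your justification of the sum-integral step is wrong as stated. The principal-piece integrand $\frac{\ln((2x-1)/17)}{x-1}$ is \emph{not} decreasing for $x\ge 9$ --- it vanishes at $x=9$, increases to a maximum near $x\approx 23$, and only then decreases --- so ``the sum is at most the integral'' does not follow from monotonicity. (The full integrand $\frac{\ln((x+17)/8.5)}{x-1}$ \emph{is} decreasing on $[9,\infty)$, so comparing before splitting would be legitimate, at the price of another additive endpoint constant $\approx 0.14$ in the exponent, which pushes you further from $8.1$.) The paper sidesteps monotonicity altogether: it proves convexity of $t\mapsto\frac1t\ln\frac{t+18}{8.5}$ (Lemma \ref{l:f}) and applies the midpoint rule (Lemma \ref{l:integralbd2}), bounding each term by $\int_{k-0.5}^{k+0.5}$. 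This second issue is repairable; the first is not within your plan --- you would end up proving \eqref{e:KZconstantUB2} with a constant around $11.5$ in place of $8.1$, a strictly weaker statement than the theorem.
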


To prove Theorem \ref{t:KZconstantUB}, we need to introduce three lemmas.
The first two lemmas, whose proofs are respectively given in Appendixes \ref{ss:Pint} and \ref{ss:Pf},
are as follows:
\begin{lemma}\label{l:integralbd2}
Suppose that $f(t)$ satisfies $f''(t)\geq 0$ for $t\in [a,b]$.
Then
\begin{equation}
\label{e:integralbd2}
(b-a) f\left(\frac{a+b}{2}\right)\leq \int_a^b f(s) d s.
\end{equation}
\end{lemma}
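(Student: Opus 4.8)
The plan is to recognize that the hypothesis $f''(t)\geq 0$ on $[a,b]$ makes $f$ convex there, and then to exploit the fact that a convex function lies above each of its tangent lines. Inequality \eqref{e:integralbd2} is precisely the left-hand (midpoint) half of the Hermite--Hadamard inequality, so I would prove it by a supporting-line argument anchored at the midpoint $c:=\frac{a+b}{2}$.

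Concretely, since $f$ is twice differentiable with $f''\geq 0$, the first-order Taylor expansion with Lagrange remainder gives, for every $s\in[a,b]$,
\[
f(s)=f(c)+f'(c)(s-c)+\tfrac{1}{2}f''(\xi)(s-c)^2\geq f(c)+f'(c)(s-c),
\]
where $\xi$ lies between $s$ and $c$ and the inequality uses $f''(\xi)\geq 0$. Thus the tangent line at $c$ lies below the graph on the whole interval, and this single pointwise inequality is the only analytic input needed.

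I would then integrate this inequality over $[a,b]$. The key observation is that the linear correction term contributes nothing, because $[a,b]$ is symmetric about its midpoint $c$:
\[
\int_a^b (s-c)\,ds=\left[\frac{(s-c)^2}{2}\right]_a^b=\frac{(b-c)^2-(a-c)^2}{2}=0,
\]
since $b-c=c-a=\frac{b-a}{2}$. Hence $\int_a^b f(s)\,ds\geq (b-a)f(c)+f'(c)\cdot 0=(b-a)f\!\left(\frac{a+b}{2}\right)$, which is exactly \eqref{e:integralbd2}.

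There is no substantive obstacle here; the only point to get right is the vanishing of the first-order term, and this is precisely what forces the anchor to be the midpoint rather than an arbitrary interior point — the same argument at any other base point would leave a nonzero linear remainder. As an alternative I could symmetrize directly, writing $\int_a^b f(s)\,ds=\frac{1}{2}\int_a^b\bigl[f(s)+f(a+b-s)\bigr]\,ds$ and applying midpoint convexity $\frac{1}{2}\bigl(f(s)+f(a+b-s)\bigr)\geq f(c)$ pointwise before integrating; this yields the same bound while using only convexity rather than the explicit tangent line.
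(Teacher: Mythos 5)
Your proof is correct, and it takes a recognizably different (more economical) route than the paper. The paper proves the lemma by first establishing the exact midpoint-rule error formula
\[
\int_a^b f(s)\,ds-(b-a)f\!\left(\frac{a+b}{2}\right)=\frac{1}{24}(b-a)^3 f''(z),\qquad z\in(a,b),
\]
which it derives from the same Taylor expansion at the midpoint that you use, but followed by integrating the remainder term and invoking the Mean Value Theorem for integrals to extract the single point $z$; the inequality then follows from $f''(z)\geq 0$. You instead apply the sign condition pointwise, discarding the Lagrange remainder \emph{before} integrating, so the only remaining computation is the vanishing of $\int_a^b (s-c)\,ds$ by symmetry of $[a,b]$ about its midpoint. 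Both arguments run on the same engine (Taylor at the midpoint plus that symmetry), but yours buys two things: it never needs the exact constant $\frac{1}{24}(b-a)^3$, and it avoids the Mean Value Theorem for integrals, whose standard form requires continuity of $f''$ --- a hypothesis the lemma does not state and your tangent-line argument does not need. The paper's route, in exchange, yields the quantitative error formula, which is of independent interest but is not used elsewhere in the paper. Your alternative symmetrization via $\frac{1}{2}\bigl(f(s)+f(a+b-s)\bigr)\geq f(c)$ is also valid and demands even less regularity, since it uses only convexity.
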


\begin{lemma}
\label{l:f}
Let
\begin{equation}
\label{e:f}
f(t):=\frac{1}{t} \ln \left(\frac{ t+18}{8.5} \right),\,\; t>0.
\end{equation}
Then $f''(t)>0$.
\end{lemma}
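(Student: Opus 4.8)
The plan is to reduce the claim $f''(t)>0$ to the positivity of a single auxiliary function whose monotonicity is transparent, rather than estimating derivatives term by term. First I would write $f(t)=g(t)/t$ with $g(t):=\ln\big((t+18)/8.5\big)=\ln(t+18)-\ln 8.5$, so that $g'(t)=1/(t+18)$ and $g''(t)=-1/(t+18)^2$. Differentiating the quotient twice gives
\[
f''(t)=\frac{g''(t)}{t}-\frac{2g'(t)}{t^2}+\frac{2g(t)}{t^3},
\]
and since $t^3>0$ for $t>0$, the inequality $f''(t)>0$ is equivalent to $h(t)>0$, where
\[
h(t):=t^3 f''(t)=t^2 g''(t)-2t\,g'(t)+2g(t)=-\frac{t^2}{(t+18)^2}-\frac{2t}{t+18}+2\ln\frac{t+18}{8.5}.
\]

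Next I would analyze $h$ directly. At the left endpoint the two rational terms vanish, so $h(0)=2\ln(18/8.5)>0$. Then I would differentiate $h$; the decisive point is that after placing everything over the common denominator $(t+18)^3$ the numerator collapses. Explicitly,
\[
h'(t)=\frac{2}{t+18}-\frac{36}{(t+18)^2}-\frac{36t}{(t+18)^3},
\]
and the numerator $2(t+18)^2-36(t+18)-36t$ simplifies to exactly $2t^2$, so that
\[
h'(t)=\frac{2t^2}{(t+18)^3}>0,\qquad t>0.
\]
The main obstacle is really this single algebraic cancellation: a priori the numerator is a quadratic with seemingly generic coefficients, and the fact that it reduces to the manifestly nonnegative $2t^2$ is what makes the entire argument work.

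Finally, since $h$ is strictly increasing on $(0,\infty)$ and $h(0)>0$, I conclude $h(t)>h(0)>0$ for every $t>0$, whence $f''(t)=h(t)/t^3>0$, as required. I would keep the $8.5$ and $18$ as exact rationals throughout so that only the sign of $\ln(18/8.5)$ and the elementary identity $2(t+18)^2-36(t+18)-36t=2t^2$ need to be verified, avoiding any numerical estimation.
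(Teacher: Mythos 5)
Your proof is correct, and while it shares the paper's broad strategy---normalize $f''$ by a positive factor and prove the resulting auxiliary function is positive via monotonicity from the left endpoint---your decomposition is genuinely different and yields a shorter argument. The paper clears the \emph{entire} denominator, writing $f''(t)=g(t)/\bigl(t^3(t+18)^2\bigr)$ with $g(t)=2(t+18)^2\ln\frac{t+18}{8.5}-(3t^2+36t)$; since $g$ is a polynomial times a logarithm, its derivative still contains a logarithm, and the paper must run the monotonicity argument twice: $g''(t)=4\ln\frac{t+18}{8.5}>0$ gives $g'$ increasing, then $g'(0)>0$ gives $g'>0$, then $g(0)>0$ gives $g>0$. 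You clear only $t^3$, taking $h(t)=t^3f''(t)$ (note that your $h$ is exactly the paper's $g$ divided by $(t+18)^2$), and the payoff is that a \emph{single} differentiation annihilates the logarithm: $h'(t)=\frac{2}{t+18}-\frac{36}{(t+18)^2}-\frac{36t}{(t+18)^3}=\frac{2t^2}{(t+18)^3}>0$, where the key cancellation $2(t+18)^2-36(t+18)-36t=2t^2$ checks out, as does $h(0)=2\ln\frac{18}{8.5}>0$ and the quotient-rule identity $f''=\frac{g''}{t}-\frac{2g'}{t^2}+\frac{2g}{t^3}$ you start from. So one monotonicity step suffices where the paper needs two nested ones, and in both proofs the only transcendental fact used is the positivity of $\ln\frac{18}{8.5}$. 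Both arguments are elementary and valid; yours is the tighter of the two.
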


The third one is from \cite[Lemma 2]{WenC18}.
\begin{lemma}\label{l:integralbd}
For $a>b>0$ and $c>0$
\begin{equation}
\label{e:integralbd}
 \int_a^b \frac{\ln( 1+c/t)}{t} d t \leq
 \frac{9}{8} \ln \frac{b(3a+2c)}{a(3b+2c)} + \frac{c(b-a)}{4ab}.
\end{equation}
\end{lemma}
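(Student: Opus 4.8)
The plan is to reduce \eqref{e:integralbd} to a single pointwise inequality for the integrand by exhibiting an explicit antiderivative of its right-hand side. The key observation is that the right-hand side of \eqref{e:integralbd} equals $F(b)-F(a)$, where
\[
F(t):=\frac{9}{8}\,\ln\frac{t}{3t+2c}-\frac{c}{4t};
\]
a direct computation gives $F(b)-F(a)=\frac{9}{8}\ln\frac{b(3a+2c)}{a(3b+2c)}+\frac{c(b-a)}{4ab}$. Writing $g(t):=\frac{\ln(1+c/t)}{t}$, the fundamental theorem of calculus then reduces \eqref{e:integralbd} to the pointwise bound $g(t)\le F'(t)$ for every $t>0$: once this is known, integrating it over the interval with endpoints $a$ and $b$ and using $\int F'=F(b)-F(a)$ gives \eqref{e:integralbd}. (Here the integral is understood to run from the smaller to the larger endpoint, the orientation for which the two sides of \eqref{e:integralbd} are nonnegative.)

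Next I would compute $F'$. Differentiating and combining the resulting fractions over the common denominator $4t^2(3t+2c)$ gives
\[
F'(t)=\frac{9c}{4t(3t+2c)}+\frac{c}{4t^2}=\frac{c(6t+c)}{2t^2(3t+2c)}.
\]
Multiplying the target inequality $g(t)\le F'(t)$ by $t>0$ and substituting $x:=c/t>0$ collapses the two parameters into one and turns it into the scale-free inequality
\[
\ln(1+x)\le\frac{x(6+x)}{2(3+2x)},\qquad x>0.
\]
This is a Pad\'e-type rational upper bound on $\ln(1+x)$ whose precise shape is dictated by the factors $3t+2c$ that appear in \eqref{e:integralbd}; locating it is the only genuinely creative step.

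To finish, set $q(x):=\frac{x(6+x)}{2(3+2x)}-\ln(1+x)$, so that $q(0)=0$. Differentiating and placing everything over the common denominator $(1+x)(6+4x)^2$, all lower-order terms cancel and the numerator reduces to the single monomial $4x^3$, whence
\[
q'(x)=\frac{4x^3}{(1+x)(6+4x)^2}\ge 0,\qquad x\ge 0.
\]
Thus $q$ is nondecreasing on $[0,\infty)$ and $q(x)\ge q(0)=0$, which is exactly the desired pointwise bound; combined with the antiderivative identity above, this proves \eqref{e:integralbd}.

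I expect the main obstacle to be pinning down the rational bound $\frac{x(6+x)}{2(3+2x)}$, since it must simultaneously dominate $\ln(1+x)$ on all of $(0,\infty)$ and possess an antiderivative matching the prescribed right-hand side. What makes both demands compatible is the clean collapse of the numerator of $q'$ to $4x^3$: this reflects that the rational function has third-order contact with $\ln(1+x)$ at $x=0$ with a single-signed remainder, which is precisely what promotes the local estimate to a global one. The remaining work -- the derivative computations and the algebraic cancellation in the numerator of $q'$ -- is purely mechanical, and, unlike the proof of Theorem \ref{t:KZconstantUB}, this estimate needs no convexity or midpoint input of the kind supplied by Lemmas \ref{l:integralbd2} and \ref{l:f}.
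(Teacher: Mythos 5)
Your proof is correct, and the two computational pivots both check out: the right-hand side of \eqref{e:integralbd} is indeed $F(b)-F(a)$ for $F(t)=\frac{9}{8}\ln\frac{t}{3t+2c}-\frac{c}{4t}$, with $F'(t)=\frac{9c}{4t(3t+2c)}+\frac{c}{4t^2}=\frac{c(6t+c)}{2t^2(3t+2c)}$, and the numerator of $q'(x)$ is $(4x^2+12x+36)(1+x)-(6+4x)^2=4x^3$, so $\ln(1+x)\le\frac{x(6+x)}{2(3+2x)}$ holds on $[0,\infty)$ (the two sides agree through order $x^3$ at $0$, as you note). A comparison with the paper's own proof is moot: the paper does not prove Lemma \ref{l:integralbd} at all, but imports it verbatim from \cite[Lemma 2]{WenC18}, so your argument supplies a complete self-contained proof. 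Moreover, the shape of the right-hand side of \eqref{e:integralbd} --- the factors $3t+2c$ and the term $\frac{c(b-a)}{4ab}$ --- is exactly what an antiderivative of $\frac{c(6t+c)}{2t^2(3t+2c)}$ produces, so the antiderivative-plus-Pad\'e route you reconstructed is the natural mechanism behind the cited bound.

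One substantive point: your parenthetical about orientation is not just a convention, it repairs an actual error in the statement. As printed, with hypothesis $a>b>0$ and the integral taken literally from $a$ to $b$, the lemma is false: your pointwise bound $g\le F'$ then integrates to $\int_a^b g\,dt\ge F(b)-F(a)$, the reverse of \eqref{e:integralbd}, and concretely $a=2$, $b=1$, $c=1$ gives a left side of $-\bigl(\pi^2/12+\mathrm{Li}_2(-1/2)\bigr)\approx-0.3741$ against a right side of $\approx-0.3760$, so the claimed ``$\le$'' fails. The intended hypothesis is $b>a>0$, which is the orientation in both places the paper invokes the lemma (limits $107.5<n-0.5$ in \eqref{e:1stub} and $k-0.5<2k-0.5$ in \eqref{e:sc1stub}). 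Your reading silently corrects this typo; it would be cleaner to state outright that the hypothesis should read $b>a>0$, after which your proof is complete as written.
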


In the following, we prove Theorem \ref{t:KZconstantUB}
by following the proof of \cite[Theorem 2]{WenC18}.
\begin{proof}
According to the proof of \cite[Cor. 2.5]{Sch87},
\begin{equation}
\label{e:KZUB11}
\alpha_n\leq \gamma_n\prod_{k=2}^n\gamma_k^{1/(k-1)}.
\end{equation}
By \cite[(53)]{WenC18}, we have
\begin{equation}
\label{e:KZUB12}
\prod_{k=2}^8\gamma_k^{1/(k-1)}= 2^{\frac{827}{420}} 3^{-\frac{8}{15}}.
\end{equation}
By \eqref{e:gammanWC}, we obtain
\begin{equation}
\label{e:KZUB13}
\prod_{k=9}^{108}\gamma_k^{1/(k-1)}\leq \prod_{k=9}^{108}\left(\frac{k}{8}+\frac{6}{5}\right)^{1/(k-1)}< 79.06.
\end{equation}

In the following, we use Theorem \ref{t:gamman} to bound $\prod_{k=109}^n\gamma_k^{1/(k-1)}$ from above.
By Theorem \ref{t:gamman}, we obtain
\begin{align}
  & \prod_{k=109}^n\gamma_k^{1/(k-1)}  \notag \\
\leq &  \prod_{k=109}^n\left(\frac{k}{8.5} + 2\right)^{1/(k-1)}
=    \prod_{k=108}^{n-1}\left(\frac{ k+18}{8.5} \right)^{1/k}    \nonumber  \\
= & \exp\left[\sum_{k=108}^{n-1}\frac{1}{k}\ln \left(\frac{ k+18}{8.5} \right)\right]\nonumber\\
\overset{(a)}{\leq}&\exp\left(\sum_{k=108}^{n-1}\int_{k-0.5}^{k+0.5}\frac{1}{t} \ln  \left(\frac{ t+18}{8.5} \right)dt\right)\label{e:proofa} \\
=&\exp\!\left(\int_{107.5}^{n-0.5}\frac{1}{t} \ln\left(\frac{t+18}{t}\frac{t}{8.5}\right)dt\right)\nonumber\\
 = &\exp\!\left(\int_{107.5}^{n-0.5} \frac{1}{t} \ln \left(1 +\frac{18}{t}\right) dt \! \right )
   \times \exp\! \left(\int_{107.5}^{n-0.5}\frac{\ln(t/8.5)}{t}dt \! \right),
\label{e:kz2terms}
\end{align}
where (a) follows from Lemma \ref{l:integralbd2} with $a=k-0.5, b=k+0.5$ and Lemma \ref{l:f}.

Now we bound the two factors on the right-hand side of \eqref{e:kz2terms} from above.
By Lemma \ref{l:integralbd}, we obtain
\begin{align}
& \exp\left( \int_{107.5}^{n-0.5}\frac{1}{t} \ln  \left(1 +\frac{18}{t}\right) dt \right)  \nonumber  \\
\leq  &  \exp\left( \frac{9}{8} \ln \frac{358.5(n-0.5)}{107.5(3(n-0.5)+36)} +\frac{18(n-108)}{430(n-0.5)}  \right)  \nonumber  \\
\leq  &  \exp\left( \frac{9}{8} \ln \frac{358.5(n-0.5)}{107.5\times3(n-0.5)} +\frac{18(n-108)}{430(n-108)}  \right)  \nonumber  \\
=   &   \left(\frac{119.5}{107.5}\right)^{9/8} \exp\left( \frac{9}{215}\right). \label{e:1stub}
\end{align}

By a direct calculation, we have
\begin{align}
 & \exp\left(\int_{107.5}^{n-0.5}\frac{\ln(t/8.5)}{t}dt\right) \nonumber\\
=\ & \exp\left(\frac{\ln^2((n-0.5)/8.5)}{2}-\frac{\ln^2(107.5/8.5)}{2}\right)   \nonumber\\
 =\ & \left(\frac{n-0.5}{8.5}\right)^{\frac{1}{2}\ln((n-0.5)/8.5)}
 \left(\frac{8.5}{107.5}\right)^{\frac{1}{2}\ln(107.5/8.5)}. \label{e:2ndub}
\end{align}

Then combining  \eqref{e:KZUB11}-\eqref{e:2ndub} and \eqref{e:gamman}, we obtain that
for $n\geq 109$,
\begin{align*}
 \alpha_n
\leq \ & 79.06\times 2^{\frac{827}{420}} 3^{-\frac{8}{15}}
       \left(\frac{119.5}{107.5}\right)^{9/8} \exp\left( \frac{9}{215}\right) \\ &  \times       \left(\frac{8.5}{107.5}\right)^{\frac{1}{2}\ln\frac{107.5}{8.5}}
\left( \frac{n}{8.5} +2\right)  \left(\frac{n-0.5}{8.5}\right)^{\frac{1}{2}\ln(\frac{n-0.5}{8.5})} \\
< \ & (8.0911 \cdots)   \left( \frac{n}{8.5} +2\right)  \left(\frac{n-0.5}{8.5}\right)^{\frac{1}{2}\ln(\frac{n-0.5}{8.5})} \\
< \ & 8.1 \left( \frac{n}{8.5} +2\right)  \left(\frac{2n-1}{17}\right)^{\frac{1}{2}\ln((2n-1)/17)}.
\end{align*}
\end{proof}

\begin{remark}
Note that although the method used in the proof of Theorem \ref{t:KZconstantUB} is similar to
that used in the proof of
\cite[Theorem 2]{WenC18}, there are some differences between them.
The main difference is here Lemma \ref{l:integralbd2} has been used in deriving the inequality \eqref{e:proofa},
while the decreasing property of the integrand was used to derive the corresponding
inequality in the proof of \cite[Theorem 2]{WenC18}.
\end{remark}

To clearly see the improvement of \eqref{e:KZconstantUB2} over \eqref{e:KZconstantUB1},
we draw the ratio of the upper bound in \eqref{e:KZconstantUB2}
to the upper bound  in  \eqref{e:KZconstantUB1}
for different rank $n$ in Fig.\ \ref{f:KZ}.
The figure shows that \eqref{e:KZconstantUB2} significantly outperforms  \eqref{e:KZconstantUB1},
and the improvement becomes more significant as $n$ gets larger.

\begin{figure}[!htbp]
\centering
\includegraphics[width=3.4 in]{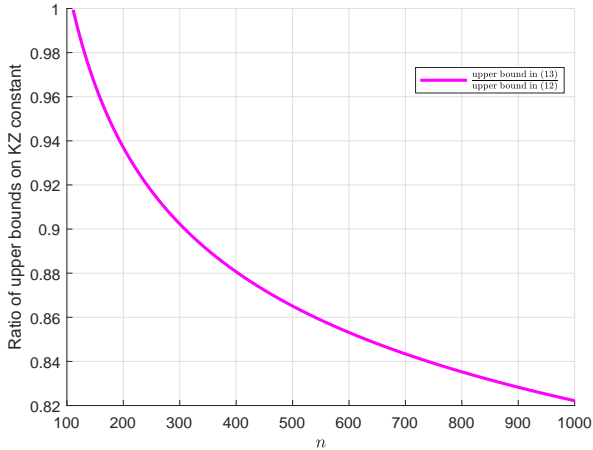}
\caption{Ratio of the upper bound in \eqref{e:KZconstantUB2}
to the upper bound in \eqref{e:KZconstantUB1} on the KZ constant
versus $n$ for $n=111:1:1000$}
\captionsetup{justification=centering}
\label{f:KZ}
\end{figure}

In the following we give remarks about some applications of Theorem \ref{t:KZconstantUB}.

\begin{remark}
\label{r:KZpf}
As \cite[Remark 2]{WenC18}, we can use the improved upper bound \eqref{e:KZconstantUB2} on $\alpha_n$
to derive new upper bounds on the proximity factors of the KZ-aided SIC decoder,
which will be  sharper than those given in \cite[Remark 2]{WenC18}.
Since the derivations are straightforward, we omit them.
\end{remark}

\begin{remark}
\label{r:KZdr}
We can use  \eqref{e:KZconstantUB2} and follow the proof of \cite[Lemma 1]{LuzSL13} to derive a lower bound on  the decoding radius of the KZ-aided SIC decoder, which
will be tighter than that given in \cite[Remark 3]{WenC18} when $n\geq 111$.
\end{remark}

\begin{remark}
\label{r:KZod}
By following the proof of \cite[Theorem 3]{WenC18} and using  \eqref{e:KZconstantUB2},
we can also develop new upper bounds on the lengths of the KZ reduced matrices,
which will be tighter than those given in \cite[Theorem 3]{WenC18} when $n\geq 111$.
\end{remark}

\section{A sharper bound on Schnorr's constant}
\label{s:sc}

In this section, we develop upper and lower bounds, on Schnorr's constant, that are sharper than those presented in \cite{GamHKN06}.

The block $2k$-reduction and semi block $2k$-reduction were introduced by Schnorr \cite{Sch87}.
A matrix $\boldsymbol{A}\in \mathbb{R}^{m\times (\ell k)}$ is called block $2k$-reduced if the
R-factor $\boldsymbol{R}$ (see \eqref{e:QR}) of the QR factorization of $\boldsymbol{A}$ satisfies \eqref{e:criteria1} and all
\[
\boldsymbol{R}_{ik+1:(i+2)k,ik+1:(i+2)k},\,\; \mbox{ for } i=0,\ldots,\ell-2
\]
are KZ reduced.

Schnorr also introduced a constant $\beta_k$ to globally measure the drop of the diagonal entries of the R-factor $\boldsymbol{R}$
of the QR factorization of block $2k$-reduced matrices. Mathematically,
\begin{equation*}
\beta_k=\sup_{\boldsymbol{A}\in \mathcal{S}_{\mathrm{KZ}}} \left(\frac{\prod_{i=1}^kr_{ii}}{\prod_{i=k+1}^{2k}r_{ii}}\right)^{2/k},
\end{equation*}
where $\mathcal{S}_{\mathrm{KZ}}$ denotes the set of all $m\times 2k$ KZ reduced matrices with full column rank,
and $r_{ii}$ for $1\leq i\leq n$ are the diagonal entries of the R-factor $\boldsymbol{R}$ of $\boldsymbol{A}$ (see \eqref{e:QR}).

A matrix $\boldsymbol{A}\in \mathbb{R}^{m\times (\ell k)}$ is called semi block $2k$-reduced if the
R-factor $\boldsymbol{R}$ (see \eqref{e:QR}) of the QR factorization of $\boldsymbol{A}$ satisfies
\begin{align*}
\prod_{j=1}^kr_{ik+j,ik+j}^2&\leq \frac{4}{3}\beta_k^k\prod_{j=1}^kr_{(i+1)k+j,(i+1)k+j}^2,\\
r_{ik,ik}^2&\leq 2\,r_{ik+1,ik+1}^2,
\end{align*}
for $ i=0,\ldots,\ell-1$ and $\boldsymbol{R}_{ik+1:(i+1)k,ik+1:(i+1)k}$ are KZ reduced for $ i=0,\ldots,\ell-1$.

The semi block $2k$-reduction is a relaxed reduction of block $2k$-reduction,
and it can be performed in polynomial time when reducing an integer matrix with suitable $k$.
It is a generalization of and more powerful than LLL reduction.

The constant $\beta_k$ was called Schnorr's constant \cite{GamHKN06} for example.
It is essential to characterize the output quality of block $2k$-reduction
and to define semi block $2k$-reduction.
However, there is no formula to calculate it.
It is important to bound it, and a better upper bound on it can better quantify the quality of (semi) block $2k$-reduction.
It has been shown in \cite[Theorem 2.7]{Sch87} that
\[
\beta_k\leq 4k^2, \ \ \mbox{for } k\geq 1,
\]
and in particular, $\beta_1=\frac{4}{3}$, $\beta_2\leq1.59$, $\beta_3\leq1.91$ and $\beta_4\leq2.25$.
This upper bound was improved to
\begin{equation}
\label{e:SCold}
\beta_k\leq \left(1+\frac{k}{2}\right)^{2 \ln2+1/k}
\end{equation}
in  \cite[Theorem 2]{GamHKN06}.
Theorem \ref{t:SC} below provides a new upper bound on $\beta_k$.
The new bound is around $1/4$ of the one given in \eqref{e:SCold} for $k\geq 40$.

On the other hand, a lower bound on $\beta_k$ is given by Gama et al. \cite{GamHKN06}:
\begin{equation} \label{e:betalb}
\beta_k\geq k/12.
\end{equation}
This lower bound  shows the limitations of (semi) block $2k$-reduction
and a sharper one can better quantify the limitation.
We will improve the lower bound to $\left(\frac{4}{\pi^2\sqrt{k}}\right)^{2/k}\frac{2k}{\pi e^{3/2}}$
which is around $k/7$ for large $k$.
Since the lower bound is derived from the lower bound on Rankin's constant,
which will be introduced in the next section, we will develop the lower bound on $\beta_k$
in Corollary \ref{c:SCld} in the next section.

Based on the exact values of $\gamma_k$ (see Table \ref{t:gamman}),
Schnorr showed that $\beta_1=4/3, \beta_2\leq 1.59, \beta_3\leq 1.91, \beta_4\leq 2.25$ \cite{Sch87}.
Hence, we derive a sharper upper bound than \eqref{e:SCold} for $k\geq 5$ in the following theorem.

\begin{theorem}
\label{t:SC}
For $k\geq 5$, Schnorr's constant $\beta_k$ satisfies
\begin{equation}
\label{e:SC}
\beta_k\leq \frac{2^{3\ln2}}{17^{2\ln2}}\left( \frac{4k-1}{17}\right)^{\frac{1}{2k-1}}e^{\frac{18}{k}}(k-0.5)^{2\ln2}.
\end{equation}
\end{theorem}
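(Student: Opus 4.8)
The plan is to reduce the theorem to the same analytic machinery already used for the KZ constant in Theorem~\ref{t:KZconstantUB}, starting from Schnorr's multiplicative bound on $\beta_k$ in terms of Hermite's constants. Concretely, for a KZ-reduced basis of a $2k$-dimensional lattice one has the relation $\beta_k\le\prod_{i=k+1}^{2k}\gamma_i^{2/(i-1)}$; this is the relation behind Schnorr's values $\beta_1=4/3$, $\beta_2\le1.59$, $\beta_3\le1.91$ (it is exactly tight at $k=1$, where it reads $\gamma_2^2=4/3$, and reproduces $2^{2/3}\approx1.587$ at $k=2$). I would take this as the starting point and insert the new linear bound $\gamma_i<(i+17)/8.5$ from Theorem~\ref{t:gamman}, which is legitimate for every $i$ in the range since $i\ge k+1\ge6$. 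Taking logarithms and reindexing by $j=i-1$ turns the product into $\ln\beta_k<2\sum_{j=k}^{2k-1}\tfrac1j\ln\tfrac{j+18}{8.5}=2\sum_{j=k}^{2k-1}f(j)$, with $f$ exactly the function of Lemma~\ref{l:f}.

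Next I would convert the sum into an integral. Since Lemma~\ref{l:f} gives $f''>0$, Lemma~\ref{l:integralbd2} applied on each interval $[j-0.5,\,j+0.5]$ yields $f(j)\le\int_{j-0.5}^{j+0.5}f(s)\,ds$, and summing over $j=k,\dots,2k-1$ gives $\ln\beta_k<2\int_{k-0.5}^{2k-0.5}\tfrac1t\ln\tfrac{t+18}{8.5}\,dt$. I would then split the integrand as $\ln\tfrac{t+18}{8.5}=\ln(1+18/t)+\ln(t/8.5)$ and treat the two resulting integrals separately, exactly mirroring \eqref{e:kz2terms}.

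For the $\ln(1+18/t)$ piece I would invoke Lemma~\ref{l:integralbd} with $a=k-0.5$, $b=2k-0.5$, $c=18$ (used as in Theorem~\ref{t:KZconstantUB}, lower limit $a$, upper limit $b$); the single integral has leading size $\approx 9/k$ (roughly $6.75/k$ from the logarithmic term and $2.25/k$ from the $\tfrac{c(b-a)}{4ab}$ term), so the doubled contribution is $\le 18/k$ and produces the factor $e^{18/k}$. For the $\ln(t/8.5)$ piece the antiderivative is elementary: $2\int_{k-0.5}^{2k-0.5}\tfrac{\ln(t/8.5)}{t}\,dt=\ln^2\tfrac{2k-0.5}{8.5}-\ln^2\tfrac{k-0.5}{8.5}$. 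I would separate off the clean part $\ln^2\!\big(2\cdot\tfrac{k-0.5}{8.5}\big)-\ln^2\tfrac{k-0.5}{8.5}=\ln^2 2+2\ln 2\,\ln\tfrac{k-0.5}{8.5}$, leaving the remainder $\ln^2\tfrac{2k-0.5}{8.5}-\ln^2\tfrac{2k-1}{8.5}=\ln\!\big(1+\tfrac{1}{4k-2}\big)\,\ln\tfrac{(2k-0.5)(2k-1)}{8.5^2}\le\tfrac{1}{2k-1}\ln\tfrac{4k-1}{17}$, which contributes the factor $\big(\tfrac{4k-1}{17}\big)^{1/(2k-1)}$. Exponentiating and using $2^{\ln 2}\big(\tfrac{k-0.5}{8.5}\big)^{2\ln 2}=\tfrac{2^{3\ln 2}}{17^{2\ln 2}}(k-0.5)^{2\ln 2}$ assembles all factors into \eqref{e:SC}.

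The main obstacle is twofold. First, one must secure the starting multiplicative inequality $\beta_k\le\prod_{i=k+1}^{2k}\gamma_i^{2/(i-1)}$ in exactly this form; it is where all the lattice geometry is concentrated (coming from Schnorr's block-reduction analysis via Hermite's constant), after which the remaining argument is purely analytic. Second, and more delicate, is the finite-$k$ verification that the $\ln(1+18/t)$ integral is genuinely $\le 18/k$ for \emph{every} $k\ge5$: the asymptotics land exactly on $18/k$, so there is essentially no slack, and one must confirm from the explicit estimate of Lemma~\ref{l:integralbd} that no small value of $k$ violates the inequality --- the analogue of the direct numerical checks that were needed near the threshold in the proofs of Theorem~\ref{t:gamman} and Theorem~\ref{t:KZconstantUB}.
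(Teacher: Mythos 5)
Your proposal is correct and follows essentially the same route as the paper's own proof: the same starting inequality from Schnorr (cited in the paper as \cite[(2.6)]{Sch87}, in the reindexed form $\beta_k\le\prod_{i=k}^{2k-1}\gamma_{i+1}^{2/i}$), the same insertion of Theorem~\ref{t:gamman}, the same midpoint-rule conversion of the sum via Lemmas~\ref{l:integralbd2} and~\ref{l:f}, and the same split of the integrand into the $\ln(1+18/t)$ and $\ln(t/8.5)$ pieces, with your algebraic decomposition of $\ln^2\tfrac{2k-0.5}{8.5}-\ln^2\tfrac{k-0.5}{8.5}$ being equivalent to the paper's splitting of the integration domain at $t=2k-1$. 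The one step you flag as an obstacle---that the doubled $\ln(1+18/t)$ contribution from Lemma~\ref{l:integralbd} is genuinely at most $18/k$---is exactly what the paper settles by an elementary chain of estimates (using $\ln(1+x)\le x$) that reduces to the quadratic inequality $120k^2-195k+69\ge 0$, so your plan goes through as stated.
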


\begin{proof}
By \cite[(2.6)]{Sch87}, we have
\[
\beta_k\leq \prod_{i=1}^{k}\gamma_{2k-i+1}^{2/(2k-i)}=\prod_{i=k}^{2k-1}\gamma_{i+1}^{2/i}.
\]
Then, by \eqref{e:gamman}, for $k\geq 5$, we obtain
\begin{align}
\beta_k\leq &  \prod_{i=k}^{2k-1}\left(\frac{ i+18}{8.5} \right)^{2/i}
=  \exp\left[2\sum_{i=k}^{2k-1}\frac{1}{i}\ln \left(\frac{ i+18}{8.5} \right)\right]\nonumber\\
\overset{(a)}{\leq}&\exp\left(2\sum_{i=k}^{2k-1}\int_{i-0.5}^{i+0.5}\frac{1}{t} \ln  \left(\frac{ t+18}{8.5} \right)dt\right)\nonumber\\
=&\exp\!\left(2\int_{k-0.5}^{2k-0.5}\frac{1}{t} \ln\left(\frac{t+18}{t}\frac{t}{8.5}\right)dt\right)\nonumber\\
 = &\exp\!\left(2\int_{k-0.5}^{2k-0.5} \frac{1}{t} \ln \left(1 +\frac{18}{t}\right) dt \! \right ) \label{e:sc2terms} \\
 &\times \exp\! \left(2\int_{k-0.5}^{2k-0.5}\frac{\ln(t/8.5)}{t}dt \! \right), \nonumber
\end{align}
where (a) follows from Lemma \ref{l:integralbd2}  with $a=i-0.5, b=i+0.5$ and Lemma \ref{l:f}.

Now we bound the two factors on the right-hand side of the equality \eqref{e:sc2terms}.
By Lemma \ref{l:integralbd}, we obtain
\begin{align}
& \exp\left(2 \int_{k-0.5}^{2k-0.5}\frac{1}{t} \ln  \left(1 +\frac{18}{t}\right) dt \right)  \nonumber  \\
\leq  &  \exp\left( \frac{9}{4} \ln \frac{(2k-0.5)(k+11.5)}{(k-0.5)(2k+11.5)} +\frac{9k}{(k-0.5)(2k-0.5)}  \right)  \nonumber  \\
<   &  \exp\left( \frac{9}{4} \ln \left(1+\frac{12k}{2k^2+10.5k-5.75}\right) +\frac{18k}{4k^2-3k}  \right)  \nonumber  \\
\leq \ &  \exp\left( 9 \left(\frac{6k}{4k^2+21k-11.5}+\frac{2}{4k-3} \right) \right)  \nonumber  \\
=  &  \exp\left( 9 \times\frac{32k^2+24k-23}{16k^3+72k^2-109k+34.5} \right)  \nonumber  \\
\leq   &   \exp\left(\frac{18}{k}\right),
\label{e:sc1stub}
\end{align}
where the last inequality is because $k\geq 5$.

By some direct calculations, we have
\begin{align}
 & \exp\left(2\int_{k-0.5}^{2k-0.5}\frac{\ln(t/8.5)}{t}dt\right) \nonumber\\
=\ &\exp\left(2\int_{k-0.5}^{2k-1}\frac{\ln(t/8.5)}{t}dt\right)
\exp\left(2\int_{2k-1}^{2k-0.5}\frac{\ln(t/8.5)}{t}dt\right) \nonumber\\
\leq\ & \exp\left(\ln^2((2k-1)/8.5)-\ln^2((k-0.5)/8.5)\right)   \nonumber\\
\ &\times\exp\left(2\int_{2k-1}^{2k-0.5}\frac{\ln ((2k-0.5)/8.5)}{2k-1}dt\right) \nonumber\\
= & \exp\left(\ln((2k-1)(k-0.5)/8.5^2)\times\ln2\right)   \nonumber\\
\ &\times\exp\left(\frac{\ln ((4k-1)/17)}{2k-1}\right) \nonumber\\
=\ & \left[\frac{(2k-1)(k-0.5)}{8.5^2}\right]^{\ln2}\times\left( \frac{4k-1}{17}\right)^{1/(2k-1)} \nonumber\\
<\ &\frac{2^{3\ln2}}{17^{2\ln2}}\times\left( \frac{4k-1}{17}\right)^{1/(2k-1)}(k-0.5)^{2\ln2}.
 \label{e:sc2ndub}
\end{align}
Then by  \eqref{e:sc2terms}-\eqref{e:sc2ndub}, we obtain \eqref{e:SC}.

\end{proof}

From the above proof we can see that \eqref{e:SC} is obtained based on the new upper bound \eqref{e:gamman}
on Hermite's constant $\gamma_n$, while the bound \eqref{e:SCold} is based on the upper bound
$\gamma_n \leq 1 + \frac{n}{4}$ \cite[p.35]{NguV10}.
It is worthwhile mentioning that the method for deriving \eqref{e:SC} is different from
that for \eqref{e:SCold}, and it is impossible to get an upper bound as sharp as \eqref{e:SC}
if we use the same method for deriving \eqref{e:SCold} even if we use \eqref{e:gamman}.
Indeed, it is shown in \cite[Theorem 2]{GamHKN06} that the asymptotic upper bound is
$\beta_k \leq \frac{1}{10}k^{2\ln2}$ based on the asymptotic upper bound $\gamma_n \leq\frac{1.744n}{2\pi e}$,
which is sharper than \eqref{e:gamman},
while it is not hard to see that the upper bound given in \eqref{e:SC} tends to $0.0833k^{2\ln 2}$.

The upper bound on $\beta_k$ given by \eqref{e:SC} is complicated,
so in the following, we give a less tight, but simpler upper bound on $\beta_k$:

\begin{corollary}
\label{c:SC}
For $k\geq 5$, Schnorr's constant $\beta_k$ satisfies
\begin{equation}
\label{e:SC2}
\beta_k\leq 0.08698e^{\frac{18}{k}}(k-0.5)^{2\ln2}.
\end{equation}
\end{corollary}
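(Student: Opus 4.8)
The plan is to obtain \eqref{e:SC2} directly from the sharper but more unwieldy bound \eqref{e:SC} of Theorem~\ref{t:SC}, by collapsing its leading coefficient into a single numerical constant. Set $C(k):=\frac{2^{3\ln 2}}{17^{2\ln 2}}\bigl(\tfrac{4k-1}{17}\bigr)^{1/(2k-1)}$, so that the right-hand side of \eqref{e:SC} is exactly $C(k)\,e^{18/k}(k-0.5)^{2\ln 2}$. Since the factors $e^{18/k}$ and $(k-0.5)^{2\ln 2}$ also appear in \eqref{e:SC2}, the corollary will follow once I show $C(k)\le 0.08698$ for every integer $k\ge 5$.

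First I would isolate the genuinely constant factor, $\frac{2^{3\ln 2}}{17^{2\ln 2}}=e^{(\ln 2)(3\ln 2-2\ln 17)}\approx 0.083217$, reducing the task to bounding $\psi(k):=\bigl(\tfrac{4k-1}{17}\bigr)^{1/(2k-1)}$ uniformly over $k\ge 5$. Treating $k$ as a continuous variable, I would study $\ln\psi(k)=\frac{\ln((4k-1)/17)}{2k-1}$: a short differentiation shows that the sign of $\tfrac{d}{dk}\ln\psi(k)$ agrees with that of $N(k):=2-\frac{2}{4k-1}-2\ln\frac{4k-1}{17}$, and that $N'(k)=\frac{8(2-4k)}{(4k-1)^2}<0$, so $N$ is strictly decreasing. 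Evaluating $N(11)>0>N(12)$ then pins the unique maximiser $k^\ast$ of $\psi$ to the open interval $(11,12)$, with $\psi$ increasing before $k^\ast$ and decreasing afterwards.

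Unimodality with a peak in $(11,12)$ means the maximum of $\psi$ over the integers $k\ge 5$ is attained at $k=11$ or $k=12$; comparing the two values gives $\psi(12)=(47/17)^{1/23}>\psi(11)$, so it suffices to check $C(12)=\frac{2^{3\ln 2}}{17^{2\ln 2}}(47/17)^{1/23}$, which a direct evaluation shows lies just below $0.08698$. The delicate point — and the real obstacle — is precisely how tight this is: the continuous supremum $\psi(k^\ast)=e^{2/(4k^\ast-1)}$ (obtained by substituting $N(k^\ast)=0$ into $\ln\psi$) is slightly larger than $\psi(12)$ and would push the product marginally \emph{above} $0.08698$. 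Consequently the estimate cannot be proved by bounding $\psi$ by its real-variable supremum; the argument must genuinely exploit that $k$ is an integer, first localising $k^\ast$ to $(11,12)$ through the monotonicity of $N$, and then verifying the inequality at the two flanking integers by explicit computation.
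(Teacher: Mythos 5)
Your proposal is correct and takes essentially the same route as the paper's own proof: the paper likewise reduces the corollary to bounding $f(x)=\bigl(\tfrac{4x-1}{17}\bigr)^{1/(2x-1)}$ over integers $x\geq 5$, analyses the sign of $h(x)=(1-4x)\ln\tfrac{4x-1}{17}+4x-2$ (which is a positive multiple of your $N(k)$) to show $f$ increases up to $x=11$ and decreases from $x=12$, and finishes with the numerical evaluation $f(12)=1.045206\ldots$ against the constant $\tfrac{2^{3\ln 2}}{17^{2\ln 2}}$. Your closing observation—that the continuous supremum of $\psi$ on $(11,12)$ would narrowly overshoot $0.08698$, so the integrality of $k$ is genuinely needed—is a point the paper exploits implicitly but does not spell out.
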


\begin{proof}
See Appendix \ref{ss:Pc}.
\end{proof}

To clearly see the improvement of \eqref{e:SC} over \eqref{e:SCold} and how close  \eqref{e:SC2} is to \eqref{e:SC},
we draw the ratios of the upper bounds in \eqref{e:SC} and \eqref{e:SC2} to that in \eqref{e:SCold}
for $k=5:5:1000$ in Fig.\ \ref{f:SC}.
From the figure, we can see that \eqref{e:SC} significantly outperforms  \eqref{e:SCold},
and the improvement is around 4 times when $k\geq 40$.
Fig.\ \ref{f:SC} also shows that the upper bound given by \eqref{e:SC2} is very close to that given by \eqref{e:SC}.

\begin{figure}[!htbp]
\centering
\includegraphics[width=3.4 in]{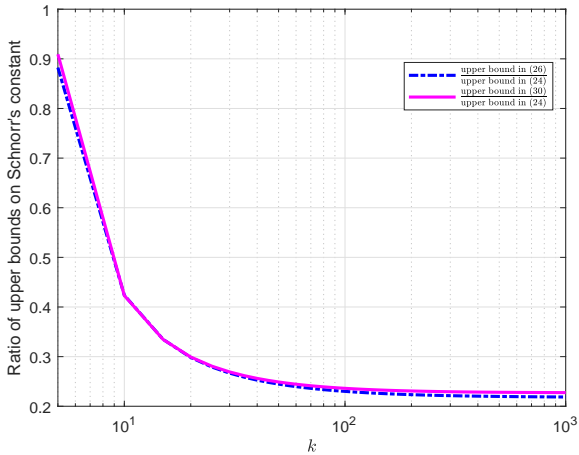}
\caption{
Ratios of two upper bounds in  \eqref{e:SC} and \eqref{e:SC2}
to that in \eqref{e:SCold} on Schnorr's constant
versus block size $k$ for $k=5:5:1000$}
\captionsetup{justification=centering}
\label{f:SC}
\end{figure}

\section{Sharper bounds on Rankin's constant}
\label{s:rc}

In this section, we present a new upper bound and a new lower bound on Rankin's constant.
The new bounds are  exponentially sharper than the existing sharpest bounds
that were developed in \cite{GamHKN06}.

Rankin's constant was defined by Rankin in 1953 \cite{Ran53} as a generalization of Hermite's constant.
For any integer $1\leq \ell\leq n$,  Rankin's constant $\gamma_{n,\ell}$ is defined as \cite{Ran53}:
\begin{equation*}
\gamma_{n,\ell}=\max_{\boldsymbol{A} \in \mathbb{R}_n^{m\times n}}\min_{\boldsymbol{B}\in \mathcal{S}_{\boldsymbol{A},\ell}}
\frac{\det(\boldsymbol{B}^\top\boldsymbol{B})}{\det(\boldsymbol{A}^\top\boldsymbol{A})^{\ell/n}},
\end{equation*}
where $\mathbb{R}_n^{m\times n}$ denotes the set of all $m\times n$ real matrices with rank $n$,
and  $\mathcal{S}_{\boldsymbol{A},\ell}$ denotes the set of all $m\times \ell$ full column rank matrices whose columns belong to $\mathcal{L}(\boldsymbol{A})$.

The quantity $\gamma_{n,\ell}$ is a measure of the output quality of block-Rankin reduction, which was introduced in \cite{GamHKN06}.
However, as far as we know, only $\gamma_{1,1}, \gamma_{4,2}$ \cite{GamHKN06} and
$\gamma_{6,2},\gamma_{8,2},\gamma_{8,3}, \gamma_{8,4}$  are known \cite{SawWO10}.
Therefore, it is useful to bound it. It has been shown in \cite[Theorems 1-3]{GamHKN06} that:
\begin{equation}
\label{e:RCold}
\left(\frac{k}{12}\right)^{k/2}\leq\gamma_{2k,k}\leq \left(1+\frac{k}{2}\right)^{k \ln2+1/2}, \,\;k\geq 2.
\end{equation}

As mentioned in \cite{GamHKN06}, the lower bound on $\gamma_{2k,k}$ suggests
that the approximation of the length of the shortest nonzero vector found
by any block-reduction algorithm (including
Schnorr's semi block $2k$-reduction) to that of the shortest nonzero vector of the lattice
based on the LLL strategy is limited.
In the following, we develop a sharper upper bound and a sharper lower bound on $\gamma_{2k,k}$.
Before presenting the new bounds, we need to introduce the following two lemmas.

\begin{lemma}
\label{l:rie}
For any integer $k\geq 2$, the Riemann zeta function $\zeta(s)=\sum_{n=1}^{\infty}\frac{1}{n^s}$ satisfies
\begin{equation}
\label{e:rie}
\frac{\prod_{i=k+1}^{2k}\zeta(i)}{\prod_{i=2}^{k}\zeta(i)}>\pi^{4-2k}2^{3k-7}.
\end{equation}
\end{lemma}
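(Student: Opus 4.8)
The plan is to throw away the numerator, which costs almost nothing, and reduce the claim to a clean upper bound on the denominator product. Since every factor satisfies $\zeta(i)>1$ for $i\ge 2$, the numerator obeys $\prod_{i=k+1}^{2k}\zeta(i)>1$, so that
\[
\frac{\prod_{i=k+1}^{2k}\zeta(i)}{\prod_{i=2}^{k}\zeta(i)} > \frac{1}{\prod_{i=2}^{k}\zeta(i)}.
\]
Hence it suffices to prove the reformulated statement that $\prod_{i=2}^{k}\zeta(i) < \pi^{2k-4}2^{7-3k}$ for every $k\ge 2$, because the right-hand side here is exactly the reciprocal of the claimed lower bound $\pi^{4-2k}2^{3k-7}$. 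Chaining the two resulting strict inequalities then yields the lemma.

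I would establish this reformulated inequality by induction on $k$. Writing $g(k):=\pi^{2k-4}2^{7-3k}$, the key observation is the convenient multiplicative structure $g(k+1)/g(k)=\pi^2/8$. For the base case $k=2$ one has $g(2)=2$ and $\prod_{i=2}^{2}\zeta(i)=\zeta(2)=\pi^2/6$, so the claim reduces to $\pi^2/6<2$, i.e.\ $\pi^2<12$, which holds. For the inductive step, multiplying the product by the single new factor $\zeta(k+1)$ must not outpace the growth factor $\pi^2/8$ of $g$: since $\zeta$ is strictly decreasing on $[2,\infty)$ (each term $n^{-s}$ decreases in $s$) and $k+1\ge 3$, we have $\zeta(k+1)\le \zeta(3)<\pi^2/8$, whence $\prod_{i=2}^{k+1}\zeta(i)=\zeta(k+1)\prod_{i=2}^{k}\zeta(i)<(\pi^2/8)\,g(k)=g(k+1)$, closing the induction.

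The main (and essentially only) obstacle is that the per-step comparison $\zeta(k+1)\le \pi^2/8$ is tight: its worst instance is $\zeta(3)\approx 1.20206$ against $\pi^2/8\approx 1.23370$, a genuine but narrow margin, so the whole argument hinges on the two numerical facts $\zeta(2)=\pi^2/6<2$ and $\zeta(3)<\pi^2/8$. The latter can be certified either from the known value of $\zeta(3)$ or, if a self-contained bound is preferred, by estimating the tail $\zeta(3)=1+\frac18+\frac1{27}+\sum_{n\ge 4}n^{-3} < 1+\frac18+\frac1{27}+\int_{3}^{\infty}x^{-3}\,dx = 1+\frac18+\frac1{27}+\frac1{18}$, which already falls below $\pi^2/8$. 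Everything else is routine, and because the numerator was bounded strictly below by $1$ while the induction gives a strict upper bound on the denominator, the final inequality of the lemma comes out strict for all $k\ge 2$.
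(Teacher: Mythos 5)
Your proof is correct and follows essentially the same route as the paper's: both discard the numerator via $\zeta(i)>1$ and bound the denominator by $\prod_{i=2}^{k}\zeta(i)\le 2\,(\pi^2/8)^{k-2}=\pi^{2k-4}2^{7-3k}$ using $\zeta(2)<2$ and $\zeta(i)\le\zeta(3)<\pi^2/8$ for $i\ge 3$; your induction is just a repackaging of that direct product estimate. The only cosmetic differences are that you use the exact value $\zeta(2)=\pi^2/6$ where the paper uses a telescoping bound, and you certify $\zeta(3)<\pi^2/8$ with a self-contained integral tail estimate where the paper cites Euler.
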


\begin{proof}
See Appendix \ref{ss:Prie}.
\end{proof}

\begin{lemma}
\label{l:gamma}
For any integer $k\geq 2$, we have
\begin{equation}
\label{e:gamma}
\frac{\prod_{i=k+1}^{2k}\Gamma(\frac{i}{2}+1)}{\prod_{i=2}^{k}\Gamma(\frac{i}{2}+1)}
\geq\sqrt{\pi}2^{\frac{k^2}{2}}e^{-\frac{3k^2+4}{4}}k^{\frac{2k^2-2k-1}{4}}\sqrt{\frac{(2k)!}{k!(k+1)!}}.
\end{equation}
\end{lemma}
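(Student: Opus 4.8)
The plan is to pinch every Gamma factor between two Stirling estimates, lower-bounding the numerator factors and upper-bounding the denominator factors, and then to split the resulting expression into a ``prefactor'' part and a ``main'' part. Concretely, I would use the two-sided bound $\sqrt{2\pi x}\,(x/e)^{x}\le \Gamma(x+1)\le \sqrt{2\pi x}\,(x/e)^{x}e^{1/(12x)}$ with $x=i/2$, so that each $\Gamma(\tfrac{i}{2}+1)$ lies between $\sqrt{\pi i}\,(\tfrac{i}{2e})^{i/2}$ and $\sqrt{\pi i}\,(\tfrac{i}{2e})^{i/2}e^{1/(6i)}$. Applying the lower bound to the $k$ numerator factors ($i=k+1,\dots,2k$) and the upper bound to the $k-1$ denominator factors ($i=2,\dots,k$), the left-hand side of \eqref{e:gamma} is bounded below by
\[
\sqrt{\pi}\,\sqrt{\tfrac{(2k)!}{(k!)^{2}}}\;\Big(\tfrac{\prod_{i=k+1}^{2k} i^{\,i/2}}{\prod_{i=2}^{k} i^{\,i/2}}\Big)\;(2e)^{-\frac{k^{2}+1}{2}}\;e^{-\frac16\sum_{i=2}^{k}\frac1i},
\]
where the $\sqrt{\pi i}$ prefactors collapse cleanly: the $\sqrt{\pi}$ arises because there is exactly one more factor in the numerator than in the denominator, and the $\sqrt{i}$ factors telescope into the factorial ratio $\sqrt{(2k)!/(k!)^{2}}$. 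The $(2e)^{i/2}$ pieces combine in closed form using $\sum_{i=k+1}^{2k} i-\sum_{i=2}^{k} i=k^{2}+1$.

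It remains to compare this lower bound with the claimed right-hand side. Two of the target factors are immediate: the leading $\sqrt{\pi}$ matches, and since $\sqrt{(2k)!/(k!(k+1)!)}=\tfrac{1}{\sqrt{k+1}}\sqrt{(2k)!/(k!)^{2}}$, the statement actually asks for a factor $\tfrac1{\sqrt{k+1}}$ \emph{less} than what the prefactors supply, giving slack in our favour; the bounded harmonic correction $e^{-\frac16(H_{k}-1)}$ is of order $k^{-1/6}$ and is easily absorbed. After cancelling these common factors, the whole inequality reduces to a single statement about
\[
\Delta:=\textstyle\sum_{i=k+1}^{2k} i\ln i-\sum_{i=2}^{k} i\ln i ,
\]
namely that $e^{\Delta/2}$, multiplied by the explicit $2$- and $e$-powers already extracted, dominates $2^{k^{2}/2}e^{-(3k^{2}+4)/4}k^{(2k^{2}-2k-1)/4}$. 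I would estimate $\Delta$ by comparing the sums with $\int x\ln x\,dx=\tfrac{x^{2}}{2}\ln x-\tfrac{x^{2}}{4}$, invoking the convexity of $t\ln t$ (so that Lemma \ref{l:integralbd2} and its trapezoidal counterpart bracket each sum by integrals over shifted intervals). The leading behaviour $\Delta\sim 2k^{2}\ln2+k^{2}\ln k-\tfrac{k^{2}}{2}$ produces precisely the $2^{k^{2}}$, $k^{k^{2}/2}$ and $e^{-k^{2}/4}$ growth needed to match the exponents $k^{2}/2$ and $(3k^{2}+4)/4$.

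The hard part will be the $O(k\ln k)$ correction terms in $\Delta$. The target exponent of $k$ is $(2k^{2}-2k-1)/4=\tfrac{k^{2}}{2}-\tfrac{k}{2}-\tfrac14$, so the $-\tfrac{k}{2}\ln k$ contribution is exactly what distinguishes the claimed bound from the crude leading-order estimate; getting it right requires tracking the boundary/Euler--Maclaurin terms $\tfrac12(2k\ln 2k+\cdots)$ in the integral comparison and choosing the directions of the convexity inequalities so that they all point the correct way for a \emph{lower} bound on $\Delta$. I would carry these estimates only to the accuracy needed to confirm that the surplus in the $k$-power (which is positive and of order $k^{k/2}$) comfortably outweighs the small deficits in the $2$- and $e$-powers and the harmonic correction, and then dispatch the finitely many small values of $k$ (down to $k=2$) by direct numerical verification, since the asymptotic argument only becomes unconditional once $k$ is moderately large.
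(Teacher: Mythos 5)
Your plan follows the paper's own proof essentially step for step: pinch each $\Gamma(\frac{i}{2}+1)$ by Stirling-type bounds (lower bound on the $k$ numerator factors, upper bound on the $k-1$ denominator factors), collapse the $\sqrt{\pi i}$ prefactors into $\sqrt{\pi}$ times a factorial ratio, extract the $(2e)$-powers via $\sum_{i=k+1}^{2k}i-\sum_{i=2}^{k}i=k^{2}+1$, and reduce the lemma to a lower bound on $\Delta=\sum_{i=k+1}^{2k}i\ln i-\sum_{i=2}^{k}i\ln i$ by comparing sums with $\int x\ln x\,dx$. The differences are minor: the paper uses Batir's two-sided bound $(x/e)^{x}\sqrt{2\pi x}<\Gamma(x+1)<(x/e)^{x}\sqrt{2\pi(x+1/2)}$, so its denominator prefactors are $\sqrt{\pi(i+1)}$ and the ratio $\sqrt{(2k)!/(k!(k+1)!)}$ appears directly with no correction term, whereas your $e^{1/(12x)}$ variant costs a factor $e^{-\frac{1}{6}\sum_{i=2}^{k}1/i}$ that you correctly absorb into the $\sqrt{k+1}$ slack; and the paper compares sums to integrals using only monotonicity of $x\ln x$ (namely $i\ln i\ge\int_{i-1}^{i}x\ln x\,dx$ and $i\ln i\le\int_{i}^{i+1}x\ln x\,dx$), rather than your convexity/midpoint bracketing, which would work at least as well.

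The genuine shortfall is that you stop exactly where the paper does its real work. The decisive content of the paper's proof is the explicit elementary derivation of
\begin{equation*}
\Delta\ \ge\ \frac{2k^{2}-2k-1}{2}\,\ln k+2k^{2}\ln 2-\frac{k^{2}+2}{2},\qquad k\ge 2,
\end{equation*}
which, inserted into the prefactor bookkeeping, yields \eqref{e:gamma} exactly, with no asymptotics and no case analysis. Your proposal only sketches the leading behaviour of $\Delta$, defers the $O(k\ln k)$ boundary terms, and falls back on numerical verification for small $k$ on the grounds that the asymptotic argument is unconditional only for moderately large $k$. That fallback is unnecessary and is the symptom of the unexecuted estimate: the monotone sum--integral comparison above, together with the paper's inequality $\frac{(k+1)^{2}\ln(k+1)}{2}-\frac{(k+1)^{2}}{4}-2\ln 2+1\le\frac{(k^{2}+2k+1)\ln k}{2}+1-\frac{k^{2}}{4}$ (valid for all $k\ge 2$), closes the argument uniformly down to $k=2$. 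So the plan is the right one and your bookkeeping is correct, but the key quantitative step is asserted rather than proved.
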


\begin{proof}
See Appendix \ref{ss:Pgamma}.
\end{proof}

As $\gamma_{4,2}$ \cite{GamHKN06} and $\gamma_{8,4}$  are known \cite{SawWO10},
and $\frac{4}{\sqrt{3}}\leq \gamma_{6,3}\leq \sqrt{6}$ \cite{SawWO10},
we provide a sharper upper bound and a sharper lower bound on $\gamma_{2k,k}$
than \eqref{e:RCold} for $k\geq 5$ in the following theorem.

\begin{theorem}
\label{t:RC}
For any integer $k\geq 5$, Rankin's constant $\gamma_{2k,k}$ satisfies
\begin{align}
\label{e:RC}
\gamma_{2k,k}&\geq \frac{4}{\pi^2\sqrt{k}}\left(\frac{2k}{\pi e^{3/2}}\right)^{k/2},\\
\gamma_{2k,k}& \leq e^{9}(0.0833)^{k/2}\left( \frac{4k-1}{17}\right)^{\frac{k}{4k-2}}(k-0.5)^{k\ln2}.
\label{e:RCUB}
\end{align}
\end{theorem}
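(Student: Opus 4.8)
The statement splits into an upper bound \eqref{e:RCUB} and a lower bound \eqref{e:RC}, and I would establish the two by quite different routes. For the upper bound, the plan is to route everything through Schnorr's constant. The intrinsic link $\gamma_{2k,k}^{2/k}\le\beta_k$ from \cite[Theorem 1]{GamHKN06} gives $\gamma_{2k,k}\le\beta_k^{k/2}$, so it suffices to raise the fresh bound of Theorem \ref{t:SC} to the power $k/2$. This is essentially bookkeeping: the factor $e^{18/k}$ becomes $e^{9}$, the factor $(k-0.5)^{2\ln 2}$ becomes $(k-0.5)^{k\ln 2}$, the factor $\big(\tfrac{4k-1}{17}\big)^{1/(2k-1)}$ becomes $\big(\tfrac{4k-1}{17}\big)^{k/(4k-2)}$, and the leading constant $\tfrac{2^{3\ln 2}}{17^{2\ln 2}}$ becomes $\big(\tfrac{2^{3\ln 2}}{17^{2\ln 2}}\big)^{k/2}$. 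The only non-mechanical point is the numerical inequality $\tfrac{2^{3\ln 2}}{17^{2\ln 2}}=0.0832\ldots<0.0833$, which lets me replace the base by $0.0833$ and land exactly on \eqref{e:RCUB}.

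For the lower bound I would begin from a generalized Minkowski--Hlawka (Siegel averaging) estimate for Rankin's constant, which expresses $\gamma_{2k,k}$, after extracting a $1/k$-th root, as a ratio of \emph{upper-half} to \emph{lower-half} dimensional factors, each factor built from $\zeta(i)$ and the reciprocal unit-ball volume $V_i^{-1}=\Gamma(\tfrac{i}{2}+1)/\pi^{i/2}$. The bound then reduces to controlling
\[
\left[\frac{\prod_{i=k+1}^{2k}\zeta(i)}{\prod_{i=2}^{k}\zeta(i)}\right]^{1/k},\qquad
\left[\frac{\prod_{i=k+1}^{2k}\Gamma(\tfrac{i}{2}+1)}{\prod_{i=2}^{k}\Gamma(\tfrac{i}{2}+1)}\right]^{1/k},
\]
together with the $\pi$-power from the volume normalisation, whose exponent is exact: since $\sum_{i=k+1}^{2k}i-\sum_{i=2}^{k}i=k^2+1$, the volumes contribute $\pi^{-(k^2+1)/(2k)}$, i.e.\ asymptotically $\pi^{-k/2}$, which is precisely the factor the $\Gamma$-ratio alone cannot supply. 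I would then feed in Lemma \ref{l:rie} to lower-bound the $\zeta$-ratio by $\pi^{4-2k}2^{3k-7}$ and Lemma \ref{l:gamma} to lower-bound the $\Gamma$-ratio; raising the latter to the $1/k$ reproduces the dominant $2^{k/2}e^{-3k/4}k^{(k-1)/2}$ behaviour up to a residual Catalan-type factor $\big(\tfrac{(2k)!}{k!(k+1)!}\big)^{1/(2k)}$ that Stirling controls. Assembling the $\zeta$, $\Gamma$, and $\pi$ contributions, the $k$-dependence collapses to $\big(\tfrac{2k}{\pi e^{3/2}}\big)^{k/2}$ and the constants to $\tfrac{4}{\pi^2\sqrt k}$, giving \eqref{e:RC}.

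The main obstacle lies entirely on the lower-bound side, and it is one of precision rather than of ideas: the averaging estimate and the two lemmas each deliver the correct order of magnitude easily, but they must be combined so that the leftover constants --- the $8/\pi^2$-type contribution from Lemma \ref{l:rie}, the Catalan factor, and the $\sqrt{\pi}$ and polynomial debris from Lemma \ref{l:gamma} --- assemble to exactly $\tfrac{4}{\pi^2\sqrt k}$ and hold as genuine inequalities for every $k\ge 5$, not merely asymptotically. Establishing Lemma \ref{l:gamma} is the other delicate step, since it requires a term-by-term Stirling estimate of two $\Gamma$-products whose leading $k^{k^2/2}$ and $2^{k^2/2}$ growths must be pinned tightly enough that their $1/k$-th roots fall on the advertised constants. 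The upper bound, by contrast, is immediate once Theorem \ref{t:SC} is available.
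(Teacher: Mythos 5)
Your proposal is correct and follows essentially the same route as the paper: the upper bound is obtained exactly as you describe, by combining $\gamma_{2k,k}\le\beta_k^{k/2}$ from \cite[Theorem 1]{GamHKN06} with Theorem \ref{t:SC} and the numerical estimate $2^{3\ln 2}/17^{2\ln 2}<0.0833$, and the lower bound likewise starts from the Minkowski--Hlawka-type averaging estimate of \cite[Corollary 1]{Thu98} (whose factors $\zeta(i)\Gamma(\tfrac{i}{2}+1)/(i\pi^{i/2})$ you identify, up to the $1/i$ weights that produce the $4k\,(k!)^2/(2k)!$ prefactor) and then feeds in Lemmas \ref{l:rie} and \ref{l:gamma} together with a Stirling bound on the leftover factorial ratio. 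The bookkeeping you flag as the main obstacle is indeed how the paper spends its effort, so there is nothing substantive to add.
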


\begin{proof}
 By \cite[Theorem 1]{GamHKN06},
\begin{equation} \label{e:gammabeta}
\gamma_{2k,k}\leq \beta_k^{k/2},
\end{equation}
then \eqref{e:RCUB} follows from Theorem \ref{t:SC} and \eqref{e:con1}.

In the following, we prove \eqref{e:RC}. By \cite[Corollary 1]{Thu98}, for $k\geq 2$,
\begin{align}
\label{e:RC1}
\gamma_{2k,k}^k\geq & \ 4k\frac{\prod_{i=k+1}^{2k}\zeta(i)\Gamma(\frac{i}{2}+1)/(i\pi^{i/2})}
{\prod_{i=2}^{k}\zeta(i)\Gamma(\frac{i}{2}+1)/(i\pi^{i/2})}\nonumber\\
=& \ 4k\frac{(k!)^2}{(2k)!}\pi^{-\frac{k^2+1}{2}}
\frac{\prod_{i=k+1}^{2k}\zeta(i)}{\prod_{i=2}^{k}\zeta(i)}
\frac{\prod_{i=k+1}^{2k}\Gamma(\frac{i}{2}+1)}{\prod_{i=2}^{k}\Gamma(\frac{i}{2}+1)}\nonumber\\
\geq&\ \frac{\pi^{4}}{2^5e}
2^{\frac{k^2}{2}+3k}e^{-\frac{3k^2}{4}}\pi^{-\frac{k^2}{2}-2k}
k^{\frac{2k^2-2k-1}{4}}\frac{k!}{\sqrt{(2k)!}}\nonumber\\
\geq&\ 2^{\frac{k^2}{2}+3k}e^{-\frac{3k^2}{4}}\pi^{-\frac{k^2}{2}-2k}
k^{\frac{2k^2-2k-1}{4}}\frac{k!}{\sqrt{(2k)!}},
\end{align}
where the second inequality follows from Lemmas \ref{l:rie} and \ref{l:gamma},
and $k\geq \sqrt{k+1}$ for $k\geq 2$.

By Stirling's approximation, we have
\[
\sqrt{2\pi}k^{k+\frac{1}{2}}e^{-k}\leq k!\leq ek^{k+\frac{1}{2}}e^{-k}.
\]
Hence,
\[
\frac{k!}{\sqrt{(2k)!}}\geq\frac{\sqrt{2\pi}k^{k+\frac{1}{2}}e^{-k}}{\sqrt{e}(2k)^{k+\frac{1}{4}}e^{-k}}
=2^{-k+\frac{1}{4}}\pi^{\frac{1}{2}}e^{-\frac{1}{2}}k^{\frac{1}{4}}
>2^{-k}k^{\frac{1}{4}}.
\]
Then, by \eqref{e:RC1}, we have
\begin{align*}
\gamma_{2k,k}^k\geq&2^{\frac{k^2}{2}+2k}e^{-\frac{3k^2}{4}}\pi^{-\frac{k^2}{2}-2k}k^{\frac{2k^2-2k}{4}}\\
 = &\left(\frac{2k}{\pi e^{3/2}}\right)^{k^2/2}\left(\frac{4}{\pi^2\sqrt{k}}\right)^{k},
\end{align*}
leading to \eqref{e:RC}.
\end{proof}

Here we make some remarks.
Gama et al.'s upper bound on $\gamma_{2k,k}$ in \eqref{e:RCold}  was obtained by combining their
upper bound on $\beta_k$ in  \eqref{e:SCold} with the inequality \eqref{e:gammabeta},
while our new bound on $\gamma_{2k,k}$ in \eqref{e:RCUB} is obtained by combining our
upper bound on $\beta_k$ in \eqref{e:SC} with the same inequality.
Since it was shown in the last section that
the upper bound on $\beta_k$ in \eqref{e:SC} is around 4 times smaller than that in \eqref{e:SCold},
the upper bound on $\gamma_{2k,k}$ in \eqref{e:RCUB} is around $2^k$ times smaller than
that in \eqref{e:RCold}.

By Corollary \ref{c:SC} and \eqref{e:gammabeta}, we can get the following upper bound, on $\gamma_{2k,k}$,
which is less sharp than \eqref{e:RCUB} but it is simpler,
\[
\gamma_{2k,k} \leq e^{9}(0.08698)^{k/2}(k-0.5)^{k\ln2}.
\]

Our new lower bound on $\gamma_{2k,k}$ in \eqref{e:RC} is sharper than that in \eqref{e:RCold},
because their ratio
\[
\frac{\frac{4}{\pi^2\sqrt{k}}\left(\frac{2k}{\pi e^{3/2}}\right)^{k/2}}{\left(\frac{k}{12}\right)^{k/2}}
=\frac{4}{\pi^2\sqrt{k}}\left(\frac{24}{\pi e^{3/2}}\right)^{k/2}
>\frac{4}{\pi^2\sqrt{k}}\times1.7^{k/2}.
\]
To clearly see the improvement, we draw the ratio of the upper bound in \eqref{e:RC} to that of \eqref{e:RCold} for $k=5:5:50$ in Fig.\ \ref{f:RC},
Fig.\ \ref{f:RC} shows that \eqref{e:RC} exponentially  outperforms  \eqref{e:RCold}, and the improvement becomes more significant as $k$ gets larger.

\begin{figure}[!htbp]
\centering
\includegraphics[width=3.4 in]{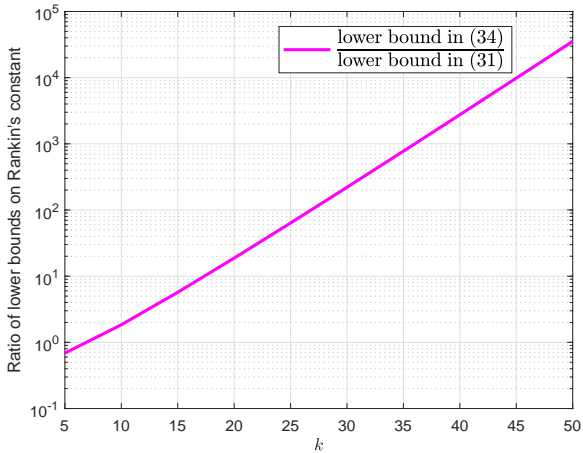}
\caption{
Ratio of the lower bound in \eqref{e:RC} to that in \eqref{e:RCold}
on Rankin's constant
versus  block size $k$  for $k=5:5:50$}
\captionsetup{justification=centering}
\label{f:RC}
\end{figure}

A byproduct of Theorem \ref{t:RC} is a lower bound on $\beta_k$,
given in the following corollary, which can be obtained from \eqref{e:gammabeta} and \eqref{e:RC}.
\begin{corollary}
\label{c:SCld}
For $k\geq 2$, Schnorr's constant $\beta_k$ satisfies
\begin{equation}
\label{e:SCld}
\beta_k\geq \left(\frac{4}{\pi^2\sqrt{k}}\right)^{2/k}\frac{2k}{\pi e^{3/2}}.
\end{equation}
\end{corollary}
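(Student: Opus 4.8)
The plan is to chain the two facts already in hand and then invert the exponent. First I would recall from \eqref{e:gammabeta} that $\gamma_{2k,k}\leq\beta_k^{k/2}$, and from \eqref{e:RC} that $\gamma_{2k,k}\geq\frac{4}{\pi^2\sqrt{k}}\left(\frac{2k}{\pi e^{3/2}}\right)^{k/2}$. Combining these gives
\[
\beta_k^{k/2}\geq\gamma_{2k,k}\geq\frac{4}{\pi^2\sqrt{k}}\left(\frac{2k}{\pi e^{3/2}}\right)^{k/2}.
\]
Since both ends are positive and $x\mapsto x^{2/k}$ is strictly increasing on $(0,\infty)$, raising each side to the power $2/k$ preserves the inequality and produces
\[
\beta_k\geq\left(\frac{4}{\pi^2\sqrt{k}}\right)^{2/k}\left(\frac{2k}{\pi e^{3/2}}\right)^{(k/2)(2/k)}=\left(\frac{4}{\pi^2\sqrt{k}}\right)^{2/k}\frac{2k}{\pi e^{3/2}},
\]
which is exactly \eqref{e:SCld}. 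No estimation beyond this algebraic manipulation is required.

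The main point demanding care is the range of $k$: the corollary asserts the bound for every $k\geq 2$, whereas Theorem \ref{t:RC} is stated only for $k\geq 5$. I would resolve this by observing that the restriction $k\geq 5$ in Theorem \ref{t:RC} is imposed solely because of the upper bound \eqref{e:RCUB}, whose justification invokes the separately known values $\gamma_{4,2}$, $\gamma_{6,3}$, and $\gamma_{8,4}$. The derivation of the lower bound \eqref{e:RC}, by contrast, runs through \eqref{e:RC1} and uses only \cite[Corollary 1]{Thu98} together with Lemmas \ref{l:rie} and \ref{l:gamma}, each of which is valid for all integers $k\geq 2$. Hence \eqref{e:RC} itself holds for $k\geq 2$, and since \eqref{e:gammabeta} from \cite[Theorem 1]{GamHKN06} is likewise valid for $k\geq 2$, the chained argument extends to the full range claimed.

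The hard part is therefore not any inequality but this bookkeeping about admissible $k$: I expect the only genuine obstacle to be confirming that every factor entering \eqref{e:RC1} (the zeta-quotient, the Gamma-quotient, and the Stirling estimate for $k!/\sqrt{(2k)!}$) is well defined and that the invoked bounds remain sharp enough for $k\in\{2,3,4\}$, so that the lower bound truly holds below $k=5$. Once that is verified, the corollary follows immediately by the exponent inversion described above.
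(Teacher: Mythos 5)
Your proof is correct and is exactly the paper's argument: the paper obtains Corollary \ref{c:SCld} precisely by combining \eqref{e:gammabeta} with the lower bound \eqref{e:RC} and raising to the power $2/k$, just as you do. Your bookkeeping about the range of $k$ is also right and resolves a point the paper leaves implicit: the proof of \eqref{e:RC} (via \eqref{e:RC1}, Lemmas \ref{l:rie} and \ref{l:gamma}, and the Stirling estimate) is indeed valid for all integers $k\geq 2$, the restriction $k\geq 5$ in Theorem \ref{t:RC} being needed only for the upper bound \eqref{e:RCUB}.
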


By some simply calculations, one can show that the new lower bound on $\beta_k$ given in \eqref{e:SCld}
is sharper than that given in \eqref{e:betalb}
when $k\geq 8$, and the improvement tends to $\frac{2k}{\pi e^{3/2}}/(k/12)> 1.70$ times.

To clearly see the improvement, we draw the ratio of the lower bound in  \eqref{e:SCld} to that of \eqref{e:betalb}
for $k=1:1:500$ in Fig.\ \ref{f:SCLBD}.
From Fig.\ \ref{f:SCLBD}, we can see that the new lower bound given in \eqref{e:SCld} is sharper than that given in \eqref{e:betalb}
when $k$ is not very small and the improvement tends to around 1.7 times.
\begin{figure}[!htbp]
\centering
\includegraphics[width=3.4 in]{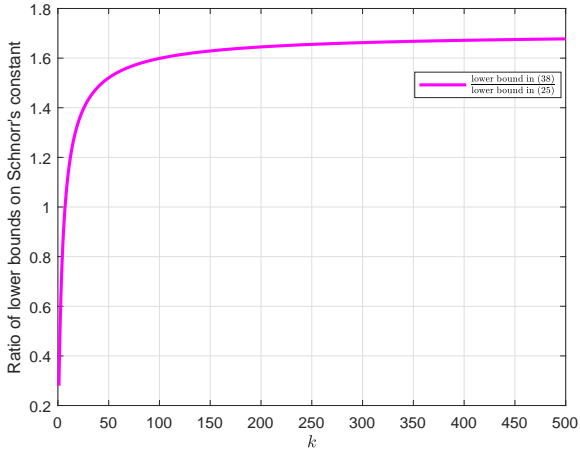}
\caption{
Ratio of the lower bound in \eqref{e:SCld}
to that in \eqref{e:betalb} on Schnorr's constant
versus block size $k$ for $k=1:1:500$}
\captionsetup{justification=centering}
\label{f:SCLBD}
\end{figure}

\section{Summary and Future Work} \label{s:sum}

We have derived upper bounds or lower bounds on
the KZ constant, Hermite's constant, Schnorr's constant and Rankin's constant, which are sharper or much sharper than the corresponding bounds in the literature.
Specifically, we first developed a new linear upper bound on Hermite's constant, and then used it to drive a new upper bound on the KZ constant
and a new upper bound on Schnorr's constant,
which was then used to derive a new upper bound on Rankin's constant.
We also developed a new lower bound on Rankin's constant separately,
and then used it to derive a new lower bound on Schnorr's constant.

The Block KZ (BKZ) reduction, slide reduction \cite{GamN08}, self-dual BKZ reduction \cite{MicW16}
are among the practical lattice reductions used in cryptanalysis.
In the future, we will investigate whether the techniques developed in this paper can be applied or extended to better quantify the output qualities of these reductions.

\section*{Acknowledgment}
We are grateful to the two anonymous referees for their valuable and thoughtful suggestions, which improve the presentation of our work significantly.

\begin{appendices}

  \section{Proof of Lemma~\ref{l:integralbd2}}
  \label{ss:Pint}

  \begin{proof}
  The left hand side of \eqref{e:integralbd2} is referred to as the midpoint rule
  for approximating  the integral on the right hand side in numerical analysis.
  It is well known that
  \begin{equation} \label{e:midpoint}
   \int_a^b f(s) d s - (b-a) f\left(\frac{a+b}{2}\right)
  = \frac{1}{24}(b-a)^3 f''(z)
  \end{equation}
  for some $z \in (a,b)$.
  This formula can be easily proved as follows.
  By Taylor's theorem,
  \begin{align*}
  f(s) = & f\left(\frac{a+b}{2}\right) + f'\left(\frac{a+b}{2}\right) \Big(s- \frac{a+b}{2}\Big) \\
  & + \frac{1}{2} f''(\zeta(s)) \Big(s- \frac{a+b}{2}\Big)^2,
  \end{align*}
  where $\zeta(s)$ depends on $s\in (a, b)$.
  Integrating both sides of the above equality over $[a,b]$
  and using the Mean-Value-Theorem for integrals
  immediately lead to \eqref{e:midpoint}.
  Then using the condition that $f''(t)\geq 0$ for $t\in [a,b]$,
  we obtain \eqref{e:integralbd2}.
  \end{proof}

  \section{Proof of Lemma~\ref{l:f}}
  \label{ss:Pf}

  \begin{proof}
  By some direct calculations, we have
  \begin{align*}
  f'(t)&\!=\frac{1}{t}\left[\frac{1}{t+18}-\frac{1}{t} \ln \left(\frac{ t+18}{8.5} \right)\right].
  \end{align*}
  Then,
  \[
  f''(t)=\!\frac{g(t)}{t^3(t+18)^2},
  \]
  where
  \[
  g(t)= 2(t+18)^2\ln \left(\frac{t+18}{8.5}\right) -(3t^2+36t),\,\; t\geq0.
  \]
  Hence, to show $f''(t)> 0$, it sufficient to show that $g(t)>0$ for $t>0$.

  By some direct calculations, we have
  \begin{align*}
  g'(t)=\ & 4(t+18)\ln \left(\frac{t+18}{8.5}\right) -4t,\\
  g''(t)=\ & 4\ln \left(\frac{t+18}{8.5}\right).
  \end{align*}
  Clearly, $g''(t)>0$ when $t\geq0$, hence $g'(t)$ is increasing with $t$.
  Furthermore, $g'(0)>0$, therefore $g'(t)>0$ for $t\geq 0$.
  By some direct calculations,  $g(0)>0$.
  Therefore, $g(t)>0$ for $t> 0$.
  \end{proof}

  \section{Proof of Corollary~\ref{c:SC}}
  \label{ss:Pc}
  \begin{proof}
  Since
  \begin{equation}
  \label{e:con1}
  \frac{2^{3\ln2}}{17^{2\ln2}}<0.083216,
  \end{equation}
  by \eqref{e:SC}, to show \eqref{e:SC2}, we show
  \begin{equation}nn
  \left (  \frac{4x-1}{17} \right )^\frac{1}{2x-1}<1.04521<\frac{0.08698}{0.083216}, \mbox{ for integer } x\geq 5.
  \end{equation}nn
  Let
  \[
  f(x)=\left (  \frac{4x-1}{17} \right )^\frac{1}{2x-1},\,\; x\geq 5,
  \]
  then
  \[
  f(12)=1.045206...<1.04521.
  \]
  Hence, to show Corollary \ref{c:SC}, we only need to show that
  \begin{equation}
  \label{c:SCineq}
  f(x)\leq f(12), \mbox{ for any integer } x\geq 5.
  \end{equation}

  To this end, let $g(x)=\ln(f(x))$, then
  \[
  g'(x)=\frac{(1-4x)\ln\frac{4x-1}{17}+4x-2}{(2x-1)^2(4x-1)}.
  \]
  Let
  \[
  h(x)=(1-4x)\ln\frac{4x-1}{17}+4x-2, \,\; x\geq 5
  \]
  then
  \[
  h'(x)=-4\ln\frac{4x-1}{17}<0.
  \]
  Hence, $h(x)$ is  monotonically decreasing.
  Furthermore, by some direct calculations, one can show that $h(11)>0, h(12)<0$,
  therefore, $h(x)>0$ for $x\leq 11$ and  $h(x)<0$ for $x\geq 12$.
  Hence $g'(x)>0$ for $x\leq 11$ and  $g'(x)<0$ for $x\geq 12$.
  Since $g(x)=\ln(f(x))$, $f(x)$ is increasing for $x\leq 11$ and decreasing for $x\geq 12$,
  therefore \eqref{c:SCineq} holds.
  \end{proof}

  \section{Proof of Lemma~\ref{l:rie}}
  \label{ss:Prie}

  \begin{proof}
  By the definition of $\zeta(s)$, we have
  \[
  \zeta(2)< 1+\sum_{n=2}^{\infty}\frac{1}{n(n-1)}
  =1+\sum_{n=2}^{\infty}\left(\frac{1}{n-1}-\frac{1}{n}\right)<2.
  \]
  Furthermore, for $i\geq 3$, we have
  \[
  \zeta(i)\leq \zeta(3)<\frac{\pi^2}{8},
  \]
  where the second inequality follows from \cite{Eul73}. Then, we have
  \[
  \prod_{i=2}^{k}\zeta(i)\leq 2 \Big(\frac{\pi^2}{8}\Big)^{k-2}=\pi^{2(k-2)}2^{7-3k}.
  \]
  Since $\zeta(i)>1$ for $i\geq k+1$, we have
  \[
  \frac{\prod_{i=k+1}^{2k}\zeta(i)}{\prod_{i=2}^{k}\zeta(i)}
  > \frac{1}{\prod_{i=2}^{k}\zeta(i)}
  \geq \pi^{4-2k}2^{3k-7}.
  \]
  \end{proof}

  \section{Proof of Lemma~\ref{l:gamma}}
  \label{ss:Pgamma}

  \begin{proof}
  By \cite[Theorem 1.6]{Bat08}, for $x\geq 1$, we have
  \[
  (x/e)^x\sqrt{2\pi x}<\Gamma(x+1)<(x/e)^x\sqrt{2\pi (x+1/2)}.
  \]

  Therefore,
  \begin{align}
  \label{e:gamma1}
  &\,\frac{\prod_{i=k+1}^{2k}\Gamma(\frac{i}{2}+1)}{\prod_{i=2}^{k}\Gamma(\frac{i}{2}+1)}\nonumber\\
  > \ &\frac{\prod_{i=k+1}^{2k}(i/2e)^{i/2}\sqrt{\pi i}}{\prod_{i=2}^{k}(i/2e)^{i/2}\sqrt{\pi (i+1)}}\nonumber\\
  =\ &\sqrt{\pi}\sqrt{\frac{2(2k)!}{k!(k+1)!}}(2e)^{\frac{1}{2}(\sum_{i=2}^ki-\sum_{i=k+1}^{2k}i)}
  \sqrt{\frac{\prod_{i=k+1}^{2k}i^{i}}{\prod_{i=2}^{k}i^{i}}}\nonumber\\
  =\ &\sqrt{\pi}\sqrt{\frac{2(2k)!}{k!(k+1)!}}(2e)^{-\frac{k^2+1}{2}}
  \sqrt{\frac{\prod_{i=k+1}^{2k}i^{i}}{\prod_{i=2}^{k}i^{i}}}\nonumber\\
  =\ &\sqrt{\pi}2^{-\frac{k^2}{2}}e^{-\frac{k^2+1}{2}}\sqrt{\frac{(2k)!}{k!(k+1)!}}
  \sqrt{\frac{\prod_{i=k+1}^{2k}i^{i}}{\prod_{i=2}^{k}i^{i}}}.
  \end{align}

  In the following, we give a lower bound on the last term of the right-hand side of \eqref{e:gamma1}.
  Since $x\ln x$ is an increasing function for $x\geq \frac{1}{e}$, we have
  \begin{align*}
  &\,\ln\left(\frac{\prod_{i=k+1}^{2k}i^{i}}{\prod_{i=2}^{k}i^{i}}\right)\\
  =\ &\sum_{i=k+1}^{2k}i\ln i-\sum_{i=2}^ki\ln i\\
  \geq&\sum_{i=k+1}^{2k}\int_{i-1}^i x\ln xdx-\sum_{i=2}^k\int_{i}^{i+1}x\ln xdx\\
  =\ &\int_{k}^{2k} x\ln xdx-\int_{2}^{k+1}x\ln x dx\\
  =\ & \Big[2k^2\ln(2k)-k^2-\frac{k^2\ln k}{2}+\frac{k^2}{4}\Big]\\
  & -\Big[\frac{(k+1)^2\ln (k+1)}{2}-\frac{(k+1)^2}{4}-2\ln 2+1\Big]\\
  \overset{(a)}{>}& \Big(2k^2\ln k+2k^2\ln2-\frac{k^2\ln k}{2}-\frac{3k^2}{4}\Big)\\
  &-\Big(\frac{(k^2+2k+1)\ln k}{2}+1-\frac{k^2}{4}\Big)\\
  =\ &\frac{(2k^2-2k-1)}{2}\ln k+2k^2\ln2-\frac{k^2+2}{2},
  \end{align*}
  where (a) follows from the following inequalities,
  \begin{align*}
  &\frac{(k+1)^2\ln (k+1)}{2}-\frac{(k+1)^2}{4}-2\ln 2+1\\
  < \ &\frac{(k+1)^2\ln k}{2}+\frac{(k+1)^2\ln (1+\frac{1}{k})}{2}-\frac{(k+1)^2}{4}\\
  < \ &\frac{(k^2+2k+1)\ln k}{2}+\frac{(k^2+2k+1)\frac{1}{k}}{2}-\frac{(k+1)^2}{4}\\
  = \ &\frac{(k^2+2k+1)\ln k}{2}+\frac{k}{2}+1+\frac{1}{2k}-\frac{k^2+2k+1}{4}\\
  \leq \ &\frac{(k^2+2k+1)\ln k}{2}+1-\frac{k^2}{4},
  \end{align*}
  where the last inequality is from $k\geq 2$.
  Hence, we have
  \[
  \sqrt{\frac{\prod_{i=k+1}^{2n}i^{i}}{\prod_{i=2}^{k}i^{i}}}
  \geq 2^{k^2}e^{-\frac{k^2+2}{4}}k^{\frac{2k^2-2k-1}{4}},
  \]
  which combines \eqref{e:gamma1} imply \eqref{e:gamma}.
  \end{proof}

\end{appendices}


\bibliography{sn-bibliography}


\end{document}